\newtheorem{theorem}{Theorem}[section]
\newtheorem{proposition}[theorem]{Proposition}
\newcommand{\beqa}{\begin{eqnarray*}}
\newcommand{\eeqa}{\end{eqnarray*}\par\noindent}
\newcommand{\XX}{\mathcal{X}}
\newcommand{\rarr}{\rightarrow}
\newcommand{\ie}{\textit{i.e.}~}
\newcommand{\Two}{\mathbf{2}}
\newcommand{\Real}{\mathbb{R}}
\newcommand{\IFF}{\; \Longleftrightarrow \;}
\newcommand{\card}[1]{|#1|}
\newcommand{\One}{\mathbf{1}}
\newcommand{\UU}{\mathcal{U}}
\newcommand{\XOR}{\mathsf{ONE}}
\newcommand{\MB}{\mathbf{M}}
\newcommand{\xx}{\mathbf{x}}
\newcommand{\vv}{\mathbf{v}}
\renewcommand{\emph}{\textbf}
\newcommand{\NN}{\mathcal{N}}
\newcommand{\yy}{\mathbf{y}}
\newcommand{\ww}{\mathbf{w}}
\newcommand{\Zero}{\mathbf{0}}
\newcommand{\prob}{\mathsf{Prob}}
\newcommand{\vphi}{\varphi}
\newcommand{\dt}{\delta^t}
\newcommand{\pij}{p_{i,j}}
\newcommand{\ev}{\eta^{\vv}}
\newcommand{\ew}{\eta^{\ww}}
\newcommand{\et}{\eta^{t}}
\newcommand{\pcf}{\psi}
\newcommand{\pU}{\pcf_{U}}
\newcommand{\tUs}{\theta^{U}_{s}}
\newcommand{\kUs}{k^{U}_{s}}
\newcommand{\pUs}{p^{U}_{s}}
\newcommand{\tU}{\theta_{U}}
\newcommand{\EP}{\mathcal{E}}
\newcommand{\kk}{\mathbf{k}}
\newcommand{\mijl}{m^i_{j,l}}
\newcommand{\Xij}{X^{i}_{j}}
\newcommand{\rr}{\mathbf{r}}
\begin{document}

\title{Logical Bell Inequalities}
\thanks{Financial support from the Perimeter Institute, EPSRC
Senior Research Fellowship EP/E052819/1, and the U.S. Office of Naval Research Grant Number
N000141010357 is gratefully acknowledged.}
\author{Samson Abramsky}
\email{samson@cs.ox.ac.uk}
\affiliation{Department of Computer Science, University of Oxford\\
Wolfson Building, Parks Road, Oxford OX1 3QD, U.K.}

\author{Lucien Hardy}
\email{lhardy@perimeterinstitute.ca}
\affiliation{Perimeter Institute, 31 Caroline Street North\\
Waterloo, ON N2L 2Y5, Canada}

\begin{abstract}
Bell inequalities play a central r\^ole in the study of quantum non-locality and entanglement, with many applications in quantum information.
Despite the huge literature on Bell inequalities, it is not easy to find a clear conceptual answer to what a Bell inequality \emph{is}, or a clear guiding principle as to how they may be derived.
In this paper, we introduce a notion of \emph{logical Bell inequality} which can be used to systematically derive testable inequalities for a very wide variety of situations.
There is a single clear conceptual principle, based on purely logical consistency conditions, which underlies our notion of logical Bell inequalities. We show that in a precise sense, \emph{all} Bell inequalities can be taken to be of this form.
Our approach is very general. It applies directly to \emph{any} family of sets of commuting observables. Thus it covers not only the $n$-partite scenarios to which Bell inequalities are standardly applied, but also Kochen-Specker configurations, and many other examples. 
There is much current work on experimental tests for \emph{contextuality}. Our approach directly yields, in a systematic fashion, testable inequalities for a very general notion of contextuality.

There has been much work on obtaining proofs of Bell's theorem `without inequalities' or `without probabilities'. These proofs are seen as  being in a sense more definitive and logically robust than the inequality-based proofs. On the hand, they lack the fault-tolerant aspect of inequalities.
Our approach reconciles these aspects, and in fact shows how the logical robustness can be converted into systematic, general derivations of inequalities with provable violations. Moreover, the kind of  strong non-locality or contextuality exhibited by the GHZ argument or by Kochen-Specker configurations can be shown to lead to \emph{maximal violations} of the corresponding logical Bell inequalities. Thus the qualitative and the quantitative aspects are combined harmoniously.
\end{abstract}

\pacs{03.65.Ud}

\maketitle

\section{Introduction}

There is a huge literature on Bell inequalities \cite{bell1964einstein,clauser1969proposed}, with many ingenious derivations of families of inequalities. However, a unifying principle with a clear conceptual basis has proved elusive.
In this paper, we introduce a form of  Bell inequality based on logical consistency conditions, which we call \emph{logical Bell inequalities}. This approach is both conceptually illuminating and technically powerful. 

To get some feeling for the results, we shall firstly discuss how Bell inequalities are used.
Their main application, of course, is to show the non-locality of quantum mechanics, as famously first demonstrated in Bell's theorem \cite{bell1964einstein}. More broadly, Bell inequalities are used to delineate those situations which can be accounted for by classical physical concepts from those which are inherently non-classical; and the content of Bell's theorem is exactly that quantum mechanics produces empirically accessible phenomena which fall into the latter category.

An important feature of the inequalities is that they have a \emph{fault-tolerant aspect} which makes them very suitable for experimental verification. Violation of a Bell inequality is quantitative, and allows non-classicality to be demonstrated without relying on idealized perfect measurements or state preparations.

There are also many applications of Bell inequalities in quantum information, for example in quantum key distribution \cite{ekert1991quantum,barrett2005no,acin2006bell}, quantum communication complexity \cite{brukner2004bell} and detection of quantum entanglement \cite{terhal2002detecting}, so that they also play a leading r\^ole in more applied work.

Although a huge literature on Bell inequalities has appeared over the past few decades, it is not easy to distill from this literature a clear conceptual answer to what a Bell inequality \emph{is}, or a clear guiding principle as to how they may be derived.

The present paper addresses this point, and introduces a notion of \emph{logical Bell inequality} which can be used to systematically derive testable inequalities for a very wide variety of situations.
The following points in particular are worth emphasizing:

\begin{itemize}
\item There is a single clear conceptual principle, based on purely logical consistency conditions, which underlies our notion of logical Bell inequalities. We show that in a precise sense, \emph{all} Bell inequalities can be taken to be of this form.

\item Our approach is very general --- much more so than the great majority of the literature on Bell inequalities. It applies directly to \emph{any} family of sets of commuting observables. Thus it covers not only the $n$-partite scenarios to which Bell inequalities are standardly applied, but also Kochen-Specker configurations, and many other examples. This is important since there is much current work on experimental tests for \emph{contextuality}, e.g.~\cite{bartosik2009experimental,kirchmair2009state}, a broader phenomenon than non-locality. Our approach directly yields, in a systematic fashion, testable inequalities for a very general notion of contextuality.

\item There has been much work on obtaining proofs of Bell's theorem `without inequalities' or `without probabilities' \cite{greenberger1990bell,hardy1992quantum,zimba1993bell}. These proofs are seen as  being in a sense more definitive and logically robust than the inequality-based proofs. On the hand, they lack the fault-tolerant aspect of inequalities.
Our approach fully reconciles these aspects, and in fact shows how the logical robustness can be converted into systematic, general derivations of inequalities with provable violations. Moreover, the kind of  strong non-locality or contextuality exhibited by the GHZ argument or by Kochen-Specker configurations can be shown to lead to \emph{maximal violations} of the corresponding logical Bell inequalities. Thus the qualitative and the quantitative aspects are combined harmoniously.
\end{itemize}

We now turn to a more precise, technical summary of our results.

We show that a rational inequality is satisfied by all non-contextual models if and only if it is equivalent to a logical Bell inequality. Thus quantitative tests for contextuality or non-locality always hinge on purely logical consistency conditions. We obtain explicit descriptions of complete sets of  inequalities for the convex polytope of non-contextual probability models, and the derived polytope of expectation values for these models. Moreover, these results are obtained at a high level of generality; they apply not only to the familiar cases of Bell-type scenarios, for any number of parties, but to all Kochen-Specker configurations, and in fact to \emph{any} family of sets of compatible measurements.
This generality is achieved by working with
\emph{measurement covers}, following the sheaf-theoretic approach to non-locality and contextuality introduced by the first author and Adam Brandenburger in \cite{abramsky2011unified}. 

We also obtain results for a number of special cases. We show that a model achieves maximal violation of a logical Bell inequality if and only if it is strongly (or maximally) contextual. We show that all Kochen-Specker configurations lead to maximal violations of logical Bell inequalities in a state-independent fashion. We also derive specific violations of logical Bell inequalities for models which are possibilistically contextual, meaning that they admit logical proofs of contextuality. Well-known examples of such models are those arising 
from a construction given by one of us twenty years ago \cite{hardy1992quantum, hardy1993nonlocality}.  

Inspiration for the present work was drawn from \cite{hardy1991new}, which derives  some particular cases of logical Bell inequalities.
Developing these ideas in the general setting provided by  \cite{abramsky2011unified} proves to be fruitful, and indicates the potential for a structural approach to quantum foundations.

 \subsection{A Simple Observation}
We begin with a simple and very general scenario.

Suppose we have propositional formulas $\vphi_1, \ldots , \vphi_N$.
We suppose further that we can assign a probability $p_i$ to each $\vphi_i$.

In particular, we have in the mind the situation where the boolean variables appearing in $\vphi_i$ correspond to empirically testable quantities; $\vphi_i$ then expresses a condition on the outcomes of an
experiment involving these quantities.
The probabilities $p_i$ are obtained from the statistics of these experiments.

Now let $P$ be the probability of $\Phi := \bigwedge_i \vphi_i$.
Using elementary probability theory, we can calculate:
 \[ \begin{array}{lclclcl}
 1 - P & = & \prob(\neg \Phi) & = & \prob(\bigvee_i \neg \vphi_i) & \leq & \sum_i \prob(\neg \vphi_i) \\
 & & & = & \sum_i (1 - p_i) & = & N - \sum_i p_i . 
 \end{array}
 \]
 Tidying this up yields $\sum_i p_i \; \leq \; N-1+P$.

Now suppose that the formulas $\vphi_i$ are \emph{jointly contradictory}; \ie $\Phi$ is unsatisfiable.
This implies that $P = 0$. Hence we obtain the inequality
 \[ \sum_i p_i \; \leq \; N-1. \]
This inequality was obtained in \cite{hardy1991new}, where it was used to derive chained Bell inequalities (as originally obtained in \cite{braunstein1990wringing}).   It is an example of a \emph{logical Bell inequality}.  In Section~\ref{generalformsection} we shall give a general form for logical  Bell inequalities.  

 \subsection{A Curious Observation}

Quantum Mechanics tells us that we can find propositions $\vphi_i$ describing outcomes of certain measurements, which not only \emph{can} but \emph{have} been performed.
From the observed statistics of these experiments, we have \emph{very} highly confirmed probabilities $p_i$.
These propositions are easily seen to be jointly contradictory.
Nevertheless, the inequality
\[ \sum_i p_i \; \leq \; N - 1 \]
is observed to be \emph{strongly violated}. In fact, the maximum violation of $1$ can be 
achieved \footnote{Since each $p_i$ is a probability, its maximum value is $1$, so the `algebraic maximum' of the sum $\sum_i p_i$ is $N$.}.

How can this be?

The best resolution to this puzzle on offer is that each formula $\vphi_i$ involves a proper subset $X_i$ of the total set $X$ of boolean variables which appear in the family, and hence in the conjunction $\Phi$. 
There is no global assignment of probabilities to all the variables $X$ simultaneously which yields the empirically observed probabilities. Hence the ascription of a probability to $\Phi$ is the invalid step.
This is given general mathematical meaning in terms of an obstruction to the existence of a global section in \cite{abramsky2011unified,abramsky2011cohomology}, extending 
\cite{fine1982hidden}.

This does seem an uncomfortably slender basis on which to defend logical consistency, since it seems hard to avoid the conclusion that the null event should be assigned probability $0$.

This argument can be seen as a theory-independent derivation of the impossibility of measuring all the variables in $X$ simultaneously, even in principle, on pain of a direct clash between logical consistency and empirical evidence.
We simply cannot regard the variables as each representing a global, context-independent quantity.

\subsection{Logical Bell and CHSH inequalities}
We shall call the inequality
\[ \sum_i p_i \; \leq \; N - 1 \]
a \emph{logical Bell inequality}.
We can also derive an associated inequality for expectations.

We shall associate truth of a formula with the value $+1$, and falsity with $-1$.
We then have the expected value $E_i$ of the formula $\vphi_i$ given by
\[ E_i = (+1)\cdot p_i + (-1)\cdot (1-p_i) = 2 p_i - 1. \]
From the Bell inequality, we obtain:
\[ \sum_i E_i = \sum_i (2p_i -1) = 2 \sum_i p_i - N \leq 2(N-1) - N = N - 2. \]
Moreover, if $K$ is an upper bound as the expectations range over probability assignments, $-K$ must be a lower bound, as we can substitute $1 - p_i$ for $p_i$ to get the expected value $-E_i$.
Thus this is a bound on the absolute value of the expectations, so we obtain the \emph{logical CHSH inequality}:
\begin{equation}
\label{chsheq}
 |\sum_i E_i | \; \leq \; N - 2 .
\end{equation}

Note that these inequalities are very general, and independent of any particular setting.
We shall now show how they apply to familiar scenarios arising from quantum mechanics and the study of non-locality.

\section{Probabilistic models of experiments}

Our general setting will be the probability models commonly studied in quantum information and quantum foundations \footnote{We shall eventually consider a much more general form of such models introduced in \cite{abramsky2011unified}, which allow a uniform treatment of contextuality, including Kochen-Specker configurations etc.}.
In these models, a number of agents each has the choice of one of several measurement settings; and each measurement has a number of distinct outcomes. For most of this paper, we shall focus on measurements with two possible outcomes; however, we will show how our results can be extended to measurements with multiple outcomes in Section~\ref{moutsec}.
For each choice of a measurement setting by each of the agents, we have a probability distribution on the joint outcomes of the measurements.

For example, consider the following tabulation of such a model.

\begin{center}
\begin{tabular}{l|ccccc}
& $(0, 0)$ & $(1, 0)$ & $(0, 1)$ & $(1, 1)$  &  \\ \hline
$(a, b)$ & $1/2$ & $0$ & $0$ & $1/2$ & \\
$(a, b')$ & $3/8$ & $1/8$ & $1/8$ & $3/8$ & \\
$(a', b)$ & $3/8$ & $1/8$ & $1/8$ & $3/8$ &  \\
$(a', b')$ & $1/8$ & $3/8$ & $3/8$ & $1/8$ &
\end{tabular}
\end{center}

Here we have two agents, Alice and Bob. Alice can choose from the settings $a$ or $a'$, and Bob can chooose from $b$ or $b'$. These choices correspond to the rows of the table. The columns correspond to the joint outcomes for a given choice of settings by Alice and Bob, the two possible outcomes for each individual measurement being represented by $0$ and $1$.
The numbers along each row specify a probability distribution on these joint outcomes.

\subsection{The Bell Model}
\label{bellmodsec}

A standard version of Bell's theorem uses the probability table given above. This table can be realized in quantum mechanics, e.g.~by a Bell state, written in the $Z$ basis as
\[ \frac{\mid \uparrow \uparrow  \rangle \; + \; \mid \downarrow \downarrow \rangle}{\sqrt{2}} , \]
subjected to spin measurements in the   $XY$-plane of the Bloch sphere, at a relative angle of $\pi/3$.

\subsubsection*{Logical analysis of the Bell table}
We now pick out a subset of the elements of each row of the table, as indicated in the following table.
\begin{center}
\begin{tabular}{l|ccccc}
& $(0, 0)$ & $(1, 0)$ & $(0, 1)$ & $(1, 1)$  &  \\ \hline
$(a, b)$ & { \fcolorbox{gray}{yellow}{1/2}} & $0$ & $0$ & { \fcolorbox{gray}{yellow}{1/2}} & \\
$(a, b')$ &  { \fcolorbox{gray}{yellow}{3/8}} & $1/8$ & $1/8$ & { \fcolorbox{gray}{yellow}{3/8}} & \\
$(a', b)$ & { \fcolorbox{gray}{yellow}{3/8}} & $1/8$ & $1/8$ & { \fcolorbox{gray}{yellow}{3/8}} &  \\
$(a', b')$ & $1/8$ & { \fcolorbox{gray}{yellow}{3/8}} & { \fcolorbox{gray}{yellow}{3/8}} & $1/8$ &
\end{tabular}
\end{center}

If we read $0$ as true and $1$ as false, the highlighted positions in the table are represented by the following propositions:
\[ \begin{array}{rcccccccc}
\vphi_1 & = &  a \wedge b & \vee & \neg a \wedge \neg b & = & a & \leftrightarrow & b \\
\vphi_2 & = & a \wedge b' & \vee & \neg a \wedge \neg b' & = &  a & \leftrightarrow & b' \\
\vphi_3 & = & a' \wedge b & \vee & \neg a' \wedge \neg b & = &  a' & \leftrightarrow & b \\
\vphi_4 & = & \neg a' \wedge b' & \vee & a' \wedge \neg b' & = & a' & \oplus & b' .
\end{array}
\]
The first three rows are the correlated outcomes; the fourth is anticorrelated.
These propositions are easily seen to be contradictory. Indeed, starting with
$\vphi_4$, we can replace $a'$ with $b$ using $\vphi_3$, $b$ with $a$ using $\vphi_1$, and $a$ with $b'$ using $\vphi_2$, to obtain $b' \oplus b'$, which is obviously unsatisfiable.

We see from the table that $p_1 = 1$, $p_i = 6/8$ for $i = 2, 3, 4$.
Hence the violation of the Bell inequality is $1/4$; and of the CHSH inequality $1/2$.

We may note that the logical pattern shown by this jointly contradictory family of propositions underlies the familiar CHSH correlation function.

\subsection*{Some notation}
Later we will develop some notation for the general case. To prepare the way we will indicate how this notation will work in the Bell model.  
First, in this case, we put
\[ X=\{ a, b, a', b'\} \] 
as this is the set of boolean variables we are interested in.  Next, we consider subsets $U \subseteq X$, corresponding to the different combinations of measurements  we might perform --- the \emph{measurement contexts}. One such subset is $U=\{a,b\}$.  We denote the set of all such subsets by $\mathcal{U}$.  Thus, in this case, we have
\[ \mathcal{U}=\{ \{a,b\}, \{a,b'\}, \{a',b\}, \{a',b'\}  \}  . \]
A basic measurement such as $a$ has possible outcomes $0$ or $1$. We shall write $\Two := \{ 0, 1 \}$ for the set of possible outcomes. A joint outcome for a set of measurements $U$ can be specified by a function $s : U \rarr \Two$. For example, if we perform the measurements in $U = \{ a, b \}$, and $a$ has outcome $0$ and $b$ has outcome $1$, this is described by the function
\[ \{ a \mapsto 0, \; b \mapsto 1 \}  \]
which maps $a$ to $0$ and $b$ to $1$.
This function corresponds to the cell in the first row and third column of the Bell table.
The set of all such functions is denoted by $\Two^U$. Thus for $U=\{a,b\}$,
\[ \Two^U = \{  f_{ij} \mid i, j = 0, 1 \} \]
where $f_{ij} = \{ a \mapsto i, b \mapsto j \}$.
This corresponds to the set of cells in the first row of the table.

A probability model such as the Bell table shown above is given by specifying a probability distribution $d_U$ on $\Two^U$ for each $U \in \UU$. Thus $d_U$ is a function $d_U : \Two^U \rarr [0, 1]$ such that $\sum_{s \in \Two^U} d_U(s) = 1$.
These distributions correspond to the rows of the Bell table.

The proposition $\vphi_1$ pertains to the context $U=\{a, b\}$; note that it only uses the variables in $U$.   We can think of $\Two^U$ as the set of truth-value assignments to the boolean variables in $U$, where we interpret $0$ as true and $1$ as false. The set of satisfying assignments for the formula $\vphi_1$ --- the subset of $\Two^U$ for which this proposition is true ---  is
\[ S(U) = \{   \{ a \mapsto 0, \; b \mapsto 0 \}, \;\; \{ a \mapsto 1, \; b \mapsto 1 \} \}  .  \]
We have given such a proposition $\vphi_i$ for each element of $\mathcal{U}$.  The highlighted items in the $i$'th row of the table  form the set $S(U_i)$ of satisfying assignments for $\vphi_i$, where $U_i$ is the corresponding measurement context.

\subsection{A bipartite logical model}
\label{bipartlogmodel}
We know turn to the model introduced by one of us in 1992 \cite{hardy1992quantum, hardy1993nonlocality}.
The original purpose of this construction was to show a `logical' proof of Bell's theorem in the bipartite case, following the GHZ tripartite construction.
Reflecting this, we shall  only need to consider the \emph{support table} of the model to demonstrate a violation of the inequalities.

Consider, for example, the following table, which has a quantum realization as described in \cite{hardy1993nonlocality}.
\begin{center}
\begin{tabular}{l|cccc}
 &  $(0, 0)$ & $(1, 0)$ & $(0, 1)$ & $(1, 1)$ \\ \hline
$(a, b)$ &  $1$ &  $1$ &  $1$ &  $1$ \\
$(a', b)$ &   $0$ &  $1$ &  $1$ &  $1$ \\
$(a, b')$ &  $0$ &  $1$ &  $1$ & $1$ \\
$(a', b')$ &   $1$ &  $1$ &  $1$ & $0$ \\
\end{tabular}
\end{center}
This table has a $1$ for every entry in the  model with a positive probability.

If we interpret outcome $0$ as true and $1$ as false, then the following formulas all have positive probability:
\[ a \wedge b, \quad \neg (a \wedge b'), \quad \neg (a' \wedge b), \quad  a' \vee b'. \]
However, these formulas are not simultaneously satisfiable.

Note that the formulas $\vphi_i$ for $i = 2,3,4$ describe the full support of this model for the corresponding rows; hence $p_2 = p_3 = p_4 = 1$.
It follows that the model achieves a violation of $p_1 = \prob(a \wedge b)$ for the Bell inequality, and a violation of  $2p_1$ for the CHSH inequality.

Note that this calculation can be made purely on the basis of the support table.

\section{The general case: structure of supports}

We now turn to a general analysis. The setting will be that of \cite{abramsky2011unified}, but we shall develop what we need in a self-contained fashion.

We shall begin by looking just at the \emph{supports} of probability models, which suffice to describe many forms of contextual and non-local behaviour, as we have already illustrated with the model described in Section~\ref{bipartlogmodel}. We shall then go on to look at generalized probability models themselves.

\subsection*{Notation} We shall use the notation introduced in the previous section: we define $\Two := \{ 0, 1 \}$, and write $\Two^U$ for the set of all functions from a set $U$ into $\Two$.  
We shall also the following notation for function restriction. If $s : X \rarr \Two$ is a function, and $U \subseteq X$, then we write $s | U : U \rarr \Two$ for the restriction of $s$ to $U$.
For example, if $X=\{ a, b, a', b'\}$, $U = \{ a, b \}$, and $s : X \rarr \Two$ is the function
\[ \{ a \mapsto 0, \; b \mapsto 1, \; a' \mapsto 1, \; b' \mapsto 0 \} \]
then $s | U$ is the function
\[ \{ a \mapsto 0, \; b \mapsto 1 \} . \]

\subsection{Structure of support tables}
We fix a set of boolean variables $X$, and a \emph{cover} $\UU$, \ie a family of subsets of $X$ such that $\bigcup \UU = X$.

A \emph{probability model} on a cover $(X, \UU)$ is a family $\{d_U \}_{U \in \UU}$, where $d_U$ is a probability distribution on $\Two^U$. 

We think of the sets $U \in \UU$ as the \emph{compatible sets of measurements}, which index the `rows' of the probability table. Given such a row $U$, $\Two^U$ is the set of possible joint outcomes of these measurements. The distribution $d_U$ gives the probability for each such joint outcome.

The \emph{support} of the model at $U \in \UU$ is the set $S(U) \subseteq \Two^U$ of those $s \in \Two^U$ such that $d_U(s) > 0$.

A \emph{global section} for the support of the model is  an assignment
\[ s : X \rarr \Two \]
such that, for all $U \in \UU$, $s | U \in S(U)$ \footnote{The support of a probability model can be given the structure of  a \emph{presheaf}, in such a way that the above definition corresponds to the usual notion of global section. See  \cite{abramsky2011unified} for details.}.

We can think of global sections in geometric terms, as coherently gluing together a family of local sections $s_U \in S(U)$, indexed by $U \in \UU$.
This geometrical idea of global section can be related to logical notions.
A formula $\vphi_U$ over a set of variables $U \in \UU$ has a set of satisfying assignments which is a subset of $\Two^U$. Note that, if $U$ is finite, \emph{any} subset of $\Two^U$ can be defined in this way by a propositional formula.
For each $U \in \UU$, let $\vphi_U$ be a formula whose set of satisfying assignments is $S(U)$.
Global sections correspond precisely to satisfying assignments for the formula
\[ \vphi \; = \; \bigwedge_{U \in \UU} \vphi_{U} . \]

As shown in detail in \cite{abramsky2011unified}, the existence of global sections provides a canonical form for \emph{non-contextual hidden-variable theories}.

We can define a probabilistic model to be \emph{possibilistically noncontextual} \cite{abramsky2011unified} if for every element $s \in S(U)$ of its support, there is a global section $s'$ such that $s' | U = s$.
If this does not hold, the model is \emph{contextual}, or in particular \emph{non-local}.
In fact, as shown in \cite{abramsky2011unified}, this form of contextuality or non-locality is strictly stronger than the usual probabilistic notions. For example, the Bell model studied in the previous section is non-local, but is in fact possibilistically non-contextual.
The possibilistically contextual models are those which admit logical proofs of Bell's theorem: `Bell's theorem without inequalities' \cite{greenberger1990bell}.

We can now give a completely general argument that for any model which is contextual in this strong possibilistic sense, we can obtain a violation of instances of the generalized Bell and CHSH inequalities.

\begin{proposition}
Any possibilistically contextual model violates a logical Bell/CHSH inequality.
\end{proposition}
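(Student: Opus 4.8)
The plan is to exhibit, for any possibilistically contextual model, an explicit jointly contradictory family of formulas together with probabilities read directly off the model, and then to invoke the logical Bell inequality $\sum_i p_i \leq N-1$ to obtain a strict violation. First I would unpack the hypothesis. Since the model is possibilistically contextual, the noncontextuality condition fails somewhere: there is a context $U_0 \in \UU$ and a local section $s_0 \in S(U_0)$ such that no global section $t : X \rarr \Two$ satisfies $t | U_0 = s_0$. This single non-extendable local section is the obstruction I will convert into a purely logical contradiction.

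Next I would build the family of formulas indexed by $\UU$. For every $U \neq U_0$, let $\vphi_U$ be a formula over the variables in $U$ whose satisfying assignments are exactly the full support $S(U)$; this is available because each $U$ is finite, so any subset of $\Two^U$ is cut out by some propositional formula. For the distinguished context $U_0$, instead let $\vphi_{U_0}$ be a formula whose \emph{unique} satisfying assignment is $s_0$ itself. Setting $N = |\UU|$, I can then read off the probabilities from the model: for $U \neq U_0$ we have $p_U = \sum_{s \in S(U)} d_U(s) = 1$, since $\vphi_U$ captures the entire support, while $p_{U_0} = d_{U_0}(s_0) > 0$ because $s_0$ lies in the support of $U_0$.

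The crucial step is to verify that $\{\vphi_U\}_{U \in \UU}$ is jointly contradictory. Suppose toward a contradiction that some $t : X \rarr \Two$ satisfies every $\vphi_U$. Then $t | U_0 = s_0$, as $\vphi_{U_0}$ admits only this assignment, and $t | U \in S(U)$ for each $U \neq U_0$ by construction; in particular $t | U_0 = s_0 \in S(U_0)$ as well. Hence $t$ is precisely a global section with $t | U_0 = s_0$, contradicting the choice of $(U_0, s_0)$. This is exactly the correspondence already noted between global sections and satisfying assignments of $\bigwedge_{U} \vphi_U$, so non-extendability is transported verbatim into unsatisfiability of the conjunction.

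With joint contradictoriness in hand, the logical Bell inequality applies to the family $\{\vphi_U\}$, and summing the probabilities gives $\sum_{U \in \UU} p_U = d_{U_0}(s_0) + (N-1) > N-1$, a strict violation of magnitude $d_{U_0}(s_0) > 0$; the associated logical CHSH inequality \eqref{chsheq} is then violated by $2 d_{U_0}(s_0)$. The only real subtlety is the \emph{asymmetry} of the construction — taking the full support at every context except $U_0$, where one isolates the single offending outcome — which is precisely the pattern already visible in the bipartite logical model of Section~\ref{bipartlogmodel}, where $\vphi_1 = a \wedge b$ pins down the non-extendable outcome and $\vphi_2, \vphi_3, \vphi_4$ capture the full supports. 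I expect no genuine obstacle beyond recognising this asymmetric choice of formulas; once it is fixed, every remaining verification is routine.
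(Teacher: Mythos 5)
Your proposal is correct and follows essentially the same route as the paper's own proof: isolate the non-extendable local section $s_0$ with a formula having it as unique satisfying assignment, take formulas defining the full supports on all other contexts, observe that non-extendability is exactly joint unsatisfiability, and conclude violations of $d_{U_0}(s_0)$ and $2d_{U_0}(s_0)$ for the Bell and CHSH inequalities respectively. The only cosmetic difference is that the paper writes the formulas out explicitly (the conjunction of literals $\vphi_{s}$ and the disjunction $\bigvee_{s' \in S(U')} \vphi_{s'}$), whereas you invoke definability of arbitrary subsets of $\Two^{U}$ abstractly.
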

\begin{proof}
Suppose that a model is possibilistically contextual, with $s \in S(U)$ such that there is no global section for $S$ restricting to $s$.
We define a formula $\vphi_s$, describing $s$. This formula can be written explicitly as
\begin{equation}
\label{phiseq}
\vphi_s \;\; :=  \;\; \bigwedge_{s(x) = 0} x \;\; \wedge \;\; \bigwedge_{s(x) = 1} \neg x . 
\end{equation}
The only satisfying assignment for $\vphi_s$ in $\Two^U$  is $s$.

For all $U' \in \UU$ with $U' \neq U$, we define $\vphi_{U'}$ to be a formula which defines the support of the model on the `row' $U'$. Explicitly, we can define:
\[ \vphi_{U'} \;\; := \;\; \bigvee_{s' \in S(U')} \vphi_{s'} . \]

The fact that there is no global section on the support which restricts to $s$
says exactly that the formula $\vphi_s \wedge \bigwedge_{U' \neq U} \vphi_{U'}$ is not satisfiable.
Since $p_{U'} = 1$ for $U \neq U' \in \UU$, the Bell inequality with respect to these formulas is violated by $p_{\vphi_{s}} = p(s) > 0$,
while violation of the CHSH inequality is by $2p(s)$.
\end{proof}

\subsection{Strong Contextuality}

A still stronger form of contextuality is identified in \cite{abramsky2011unified}. A model is defined to be \emph{strongly contextual} if its support has no global section; equivalently, the propositional formulas defining its support are not simultaneously satisfiable.

It is shown in \cite{abramsky2011unified} that all $n$-partite states GHZ($n$), for $n \geq 3$, are strongly contextual in this sense. It is also shown that strong contextuality is equivalent to the model being maximally contextual, in the sense of having no non-trivial convex decomposition into a non-contextual model and a no-signalling model.

We now have the following result.

\begin{proposition}
\label{scprop}
A model achieves maximal violation of a logical Bell inequality if and only if it is strongly contextual.
\end{proposition}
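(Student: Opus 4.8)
The plan is to unwind precisely what "maximal violation" means and to connect it to the (non)existence of a global section. Recall that a logical Bell inequality has the form $\sum_i p_i \leq N-1$ for a jointly contradictory family $\vphi_1, \ldots, \vphi_N$, where each $\vphi_i$ is a formula over the variables of some context $U_i \in \UU$ and $p_i = \prob(\vphi_i)$ is the probability the model assigns to $\vphi_i$ via the distribution $d_{U_i}$. Since each $p_i \leq 1$, the algebraic maximum of $\sum_i p_i$ is $N$, so a violation of $1$ --- the maximal violation --- is achieved precisely when $p_i = 1$ for every $i$. The whole argument turns on one elementary observation: $p_i = 1$ holds if and only if every element $s \in S(U_i)$ of the support satisfies $\vphi_i$ (equivalently, $d_{U_i}$ puts no mass off the satisfying set of $\vphi_i$).

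For the direction \emph{strongly contextual $\Rightarrow$ maximal violation}, I would exhibit the canonical inequality built from the support itself. Take $\vphi_U$, for each $U \in \UU$, to be a formula whose satisfying assignments are exactly $S(U)$. By the characterization of strong contextuality recalled in the text, these formulas are jointly unsatisfiable, so $\{\vphi_U\}_{U \in \UU}$ is a jointly contradictory family and yields a logical Bell inequality with $N = |\UU|$. Because $\vphi_U$ captures the entire support, all the mass of $d_U$ lies on its satisfying set, whence $p_U = 1$ for every $U$; thus $\sum_U p_U = N$ and the violation is exactly $1$.

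For the converse, \emph{maximal violation $\Rightarrow$ strongly contextual}, I would argue by contraposition. Suppose the model is not strongly contextual, so there is a global section $g : X \rarr \Two$ with $g | U \in S(U)$ for all $U \in \UU$. Let $\vphi_1, \ldots, \vphi_N$ be any family underlying a logical Bell inequality that the model violates maximally, so $p_i = 1$ for each $i$. By the key observation, $p_i = 1$ forces every element of $S(U_i)$ --- in particular $g | U_i$ --- to satisfy $\vphi_i$; and since $\vphi_i$ mentions only variables in $U_i$, this gives $g \models \vphi_i$ for every $i$. Hence $g$ satisfies $\bigwedge_i \vphi_i$, contradicting the joint contradictoriness of the family. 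Therefore no logical Bell inequality can be maximally violated unless the model is strongly contextual.

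The argument is short, and the only real care needed is in pinning down the framework: one must fix that each $\vphi_i$ in a logical Bell inequality is a formula over (a subset of) the variables of a single context $U_i$, so that $p_i$ is genuinely a probability read off from $d_{U_i}$ and so that a putative global section can be evaluated against $\vphi_i$. Given this, the main conceptual step --- and the place where the two notions meet --- is the translation between the probabilistic condition $p_i = 1$ and the combinatorial condition that $S(U_i)$ lie inside the satisfying set of $\vphi_i$; everything else is then a matter of gluing local satisfaction into global satisfaction along the global section.
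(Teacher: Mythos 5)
Your proof is correct and follows essentially the same route as the paper: the forward direction uses the identical canonical construction (formulas $\vphi_U$ defining the supports, jointly unsatisfiable by strong contextuality, each with probability $1$), and your converse is just the contrapositive of the paper's argument, resting on the same key observation that $p_i = 1$ forces the support at $U_i$ into the satisfying set of $\vphi_i$, so that a global section of the support would glue into a satisfying assignment of $\bigwedge_i \vphi_i$.
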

\begin{proof}
Suppose that the model is strongly contextual. For each row $U$, we can define the formula $\vphi_U$ corresponding to the support of the model on that row, as in the proof of the previous proposition.
Since the probability of each $\vphi_U$ is $1$, we obtain the maximum violation of $1$.

For the converse, if maximal violation is achieved, there are a family of rows $U_1, \ldots , U_N$, and propositions $\vphi_{i}$ defining subsets $S(U_{i}) \subseteq \Two^{U_i}$, such that $\bigwedge_i \vphi_{i}$ is unsatisfiable, and $\sum_i p_i = N$. This implies that $p_i = 1$ for all $i$, and hence that $S(U_i)$ contains the support of the model on $U_i$. The unsatisfiability of $\bigwedge_i \vphi_{i}$ means that there is no global section which restricts to each $S(U_i)$, which means \textit{a fortiori} that the model is strongly contextual.
\end{proof}

\subsubsection*{Example:~the GHZ state}
We consider the tripartite GHZ state \cite{greenberger1989going,greenberger1990bell}, which we write in the $Z$ basis as
\[ \frac{\mid \uparrow \uparrow \uparrow  \rangle \; + \; \mid \downarrow \downarrow \downarrow \rangle}{\sqrt{2}} , \]
with $X$ and $Y$ measurements  in each component.
The relevant part of the support table for the resulting probability model can be specified as follows:
\begin{center}
\begin{tabular}{c|cccccccc}
&  $000$ & $001$ & $010$ & $011$  & $100$ & $101$ & $110$ & $111$  \\ \hline
$abc$ & $1$ & $0$ & $0$ & $1$ & $0$ & $1$ & $1$ & $0$  \\
$ab'c'$ & $0$ & $1$ & $1$ & $0$ & $1$ & $0$ & $0$ & $1$  \\
$a'bc'$ & $0$ & $1$ & $1$ & $0$ & $1$ & $0$ & $0$ & $1$  \\
$a'b'c$ & $0$ & $1$ & $1$ & $0$ & $1$ & $0$ & $0$ & $1$ 
\end{tabular}
\end{center}
Given boolean variables $x$, $y$, $z$, we define
\begin{equation}
\label{pdef}
\Psi_{xyz} \; := \; \neg x \oplus \neg y \oplus \neg z .
\end{equation}
The support for each row can be specified by the following formulas:
\[ \vphi_1 := \neg \Psi_{abc}, \;\; \vphi_{2} := \Psi_{ab'c'},  \;\; \vphi_3 := \Psi_{a'bc'},  \;\; \vphi_4 := \Psi_{a'b'c} . \]
It can be verified that these formulas are not simultaneously satisfiable; in fact, such a verification is what the well-known argument by Mermin in terms of `instruction sets' \cite{mermin1990quantum} amounts to.

Thus the tripartite GHZ  state maximally violates a logical Bell inequality. Similar arguments apply to $n$-partite GHZ states for all $n > 3$; see \cite{abramsky2011unified}.

\subsubsection*{Example:~the PR box}
We consider the Popescu-Rohrlich box \cite{popescu1994quantum}, which achieves super-quantum correlations while respecting no-signalling.

\begin{center}
\begin{tabular}{l|ccccc}
& $(0, 0)$ & $(0, 1)$ & $(1, 0)$ & $(1, 1)$  &  \\ \hline
$(a, b)$ & $1$ & $0$ & $0$ & $1$ & \\
$(a, b')$ & $1$ & $0$ & $0$ & $1$ & \\
$(a', b)$ & $1$ & $0$ & $0$ & $1$ & \\
$(a', b')$ & $0$ & $1$ & $1$ & $0$ & 
\end{tabular}
\end{center}

The supports of the rows of this table are specified by the following formulas:
\[ a \leftrightarrow b, \qquad a  \leftrightarrow b', \qquad a'  \leftrightarrow b, \qquad a' \oplus b' \]
which are not simultaneously satisfiable. Thus this model maximally violates a logical Bell inequality.

Note that these formulas are the same as those we used for the Bell model in section~\ref{bellmodsec}.
In this case, however, they cover the whole support of the model, corresponding to the fact that the PR-box attains the algebraic maximum of the CHSH correlation function.

\subsection{Kochen-Specker configurations}

The notion of model we are considering, following \cite{abramsky2011unified},  is much more general than the usual `Bell scenarios'.
For example, any set $\XX$ of quantum observables  gives rise to a cover in our sense, where the sets in the cover correspond to the maximal compatible subsets of $\XX$.
Since we are currently restricting our attention to two-outcome measurements, we shall only consider dichotomic observables.
If we fix a state, then for each maximal set of compatible observables, \ie each row of the table, we get a probability distribution on joint outcomes of the observables in the family, following the usual quantum mechanical recipe. The details are spelled out in \cite{abramsky2011unified}.

The usual Bell case arises when the observables are partitioned according to the sites or parties; the sets in the cover correspond to a choice of one observable from each site, represented on a tensor product in the standard fashion.

 Equally, however, any Kochen-Specker configuration gives rise to a cover in our sense \footnote{For further details on this perspective on the Kochen-Specker theorem, see \cite{abramsky2011unified}.}.
 Given a family of unit vectors representing distinct rays in $\Real^d$, we consider the corresponding dichotomic observables, whose spectral resolutions project onto the ray and its orthogonal. We shall label the outcome corresponding to the ray as $0$, and the orthogonal outcome as $1$.

These observables are compatible if and only if the corresponding rays are orthogonal. Thus the maximal compatible families correspond to the families of vectors which determine orthonormal bases of $\Real^d$.
It follows that, for any quantum state, the only possible outcomes for one of these maximal compatible families are those where exactly one of the outcomes is labelled $0$. Thus for any state, the support of the probability model it gives rise to satisfies the following formula for each set $U$ in the cover:
\[ \XOR(U) \; := \; \bigvee_{x \in U} (x \; \wedge \; \bigwedge_{x' \in U \setminus \{ x \}} \neg x') . \]
The essential property of Kochen-Specker configurations is exactly that there is no global section for this family of supports; or equivalently, that the formula
\[ \bigwedge_{U \in \UU} \XOR(U)  \]
is unsatisfiable. It follows immediately that, given a Kochen-Specker configuration, the probability model generated by \emph{any} quantum state with respect to the corresponding family of observables is strongly contextual.
This fully explicates the state-independent nature of the Kochen-Specker theorem.

Hence we obtain the following corollary to Proposition~\ref{scprop}.
\begin{proposition}
For any Kochen-Specker configuration, and for any quantum state, the corresponding probability model maximally violates a logical Bell inequality.
\end{proposition}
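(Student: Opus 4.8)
The plan is to obtain this as an immediate corollary of Proposition~\ref{scprop}: it suffices to show that the probability model arising from \emph{any} quantum state on a Kochen-Specker configuration is strongly contextual, whereupon Proposition~\ref{scprop} (in the direction strong contextuality $\Rightarrow$ maximal violation) delivers the conclusion. So the entire task reduces to establishing strong contextuality, state-independently.

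First I would recall the structure set up in the preceding discussion. For a configuration of rays in $\Real^d$, the maximal compatible families $U \in \UU$ are exactly those sets of vectors forming an orthonormal basis, and for each such $U$ the associated dichotomic observables assign outcome $0$ to their ray and $1$ to its orthogonal complement. The key containment is that, for \emph{every} quantum state, the support $S(U) \subseteq \Two^U$ lies inside the satisfying set of $\XOR(U)$: measuring an orthonormal basis yields exactly one `hit', so any joint outcome with positive probability labels precisely one vector of $U$ by $0$ and all the rest by $1$.

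Next, I would argue that this containment blocks any global section. A global section $s : X \rarr \Two$ for the support must satisfy $s | U \in S(U)$ for all $U$, hence $s \models \XOR(U)$ for all $U$, so $s$ would be a satisfying assignment of $\bigwedge_{U \in \UU} \XOR(U)$. But the defining property of a Kochen-Specker configuration is precisely that this conjunction is unsatisfiable. Hence no global section exists, the model is strongly contextual, and --- since nothing in the argument referred to the amplitudes --- this holds uniformly in the state, explicating the state-independence of the theorem. Proposition~\ref{scprop} then applies verbatim.

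The main point requiring care is the universal-over-states containment $S(U) \subseteq \{ s \mid s \models \XOR(U) \}$: one must check that it is the \emph{spectral and orthogonality structure} of the observables, rather than any feature of a particular state, that forces exactly one outcome $0$ in each row. Once this is in hand there is no genuine obstacle; the argument is a clean instantiation of Proposition~\ref{scprop}, with all the real work having been done in identifying $\XOR(U)$ as the support formula and in the combinatorial unsatisfiability that characterises Kochen-Specker configurations.
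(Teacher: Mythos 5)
Your proposal is correct and follows essentially the same route as the paper: the paper likewise observes that for any quantum state the support at each maximal compatible family (orthonormal basis) lies in the satisfying set of $\XOR(U)$, that the defining property of a Kochen-Specker configuration is the unsatisfiability of $\bigwedge_{U \in \UU} \XOR(U)$, hence no global section exists and the model is strongly contextual state-independently, and then invokes Proposition~\ref{scprop} to conclude maximal violation. Your emphasis on the state-independent containment $S(U) \subseteq \{ s \mid s \models \XOR(U) \}$ as the one point needing care matches the paper's reasoning exactly.
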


Thus we have a perfectly general way of obtaining experimentally  testable inequalities, with maximal violations, from any Kochen-Specker configuration.

\subsubsection*{Example:~the 18-vector configuration in $\Real^4$}
We look at the 18-vector construction in $\Real^4$ from \cite{cabello1996bell}.
This uses the following measurement cover $\UU = \{ U_1 , \ldots , U_9 \}$, where the columns $U_i$ are the sets in the cover.
\begin{center}
\begin{tabular}{|c|c|c|c|c|c|c|c|c|} \hline
$U_1$ & $U_2$ &  $U_3$ & $U_4$ & $U_5$ & $U_6$ & $U_7$ & $U_8$ & $U_9$ \\ \hline\hline
$A$ & $A$ & $H$ & $H$ & $B$ & $I$ & $P$ & $P$  & $Q$ \\ \hline
$B$ & $E$ & $I$ & $K$ & $E$ & $K$ & $Q$ & $R$ & $R$  \\ \hline
$C$ & $F$ & $C$ & $G$ & $M$ &  $N$ & $D$ & $F$ & $M$  \\ \hline
$D$ & $G$ & $J$ & $L$ & $N$  & $O$ & $J$ & $L$ & $O$  \\ \hline
\end{tabular}
\end{center}

The standard argument that this is a Kochen-Specker configuration \cite{cabello1996bell,abramsky2011unified} amounts to verifying that the formula
\[ \bigwedge_{i=1}^9 \XOR(U_i)  \]
is unsatisfiable.
Thus for any quantum state, the resulting probability model will maximally violate a logical Bell inequality.

\subsubsection*{Example:~the Peres-Mermin Square}
We look at an important example, the Peres-Mermin square \cite{peres1990incompatible,mermin1990simple}, which can be realized in quantum mechanics using two-qubit observables.

The structure of the square is as follows:
\begin{center}
\begin{tabular}{|c|c|c|}
\hline
$A$ & $B$ & $C$ \\ \hline
$D$ & $E$ & $F$ \\ \hline
$G$ & $H$ & $I$ \\ \hline
\end{tabular}
\end{center}
The compatible families of measurements are the rows and columns of this table.
The  key property differs from the usual Kochen-Specker situation
in that we don't ask for exactly one $1$ at each maximal context. Instead, we ask
that each `row context' has an odd number of $1$'s whereas each `column context'
has an even number of $1$'s.
Hence the support table is the following.
\begin{center}
\begin{tabular}{l|cccccccc}
      & $000$ & $001$ & $010$ & $011$ & $100$ & $101$ & $110$ & $111$   \\ \hline
$ABC$ &  $0$  &  $1$  &  $1$  &  $0$  &  $1$  &  $0$  &  $0$  &  $1$  \\
$DEF$ &  $0$  &  $1$  &  $1$  &  $0$  &  $1$  &  $0$  &  $0$  &  $1$  \\
$GHI$ &  $0$  &  $1$  &  $1$  &  $0$  &  $1$  &  $0$  &  $0$  &  $1$  \\
$ADG$ &  $1$  &  $0$  &  $0$  &  $1$  &  $0$  &  $1$  &  $1$  &  $0$  \\
$BEH$ &  $1$  &  $0$  &  $0$  &  $1$  &  $0$  &  $1$  &  $1$  &  $0$  \\
$CFI$ &  $1$  &  $0$  &  $0$  &  $1$  &  $0$  &  $1$  &  $1$  &  $0$  \\
\end{tabular}
\end{center}
Note that the first three lines correspond to the row contexts and the remaining
three to the column contexts from the square.

The following formulas characterize the supports for each line of the table:
\[ \begin{array}{ccccccccc}
\vphi_{1} & := & \Psi_{ABC},  & \vphi_2 & := & \Psi_{DEF},  & \vphi_3 & := & \Psi_{GHI} \\
\vphi_4 & := & \neg \Psi_{ADG}, & \vphi_5 & := & \neg \Psi_{BEH}, & \vphi_6 & := & \neg \Psi_{CFI} . 
\end{array}
\]
Here we use $\Psi_{xyz}$ as defined in~(\ref{pdef}).

It can be verified that these formulas are not simultaneously satisfiable. 
Thus the Peres-Mermin square maximally violates a logical Bell inequality.

\section{General probabilistic models}
Suppose we are given a cover $\UU$ on a set $X$.
A general probability model over $\UU$ assigns a probability distribution $d_U$ on the set $\Two^U$ for each $U \in \UU$ \footnote{In this note, following \cite{abramsky2011unified}, we only consider finite sets of measurements, and thus discrete probability distributions suffice.}.

Each global assignment $t \in \Two^X$ induces a deterministic probability model $\dt$:
\[ \dt_U(s) = \left\{ \begin{array}{ll}
1, & t|U = s \\
0 & \mbox{otherwise.}
\end{array}
\right.
\]

We have the following result from \cite[Theorem 8.1]{abramsky2011unified}:
\begin{theorem}
\label{ncdetth}
 A probability model $\{ d_U \}$ is non-contextual if and only if it can be written as a convex combination $\sum_{j \in J} \mu_j \delta^{t_j}$ where $t_j \in \Two^X$ for each $j \in J$.
This means that for each $U \in \UU$, 
\[ d_U = \sum_j \mu_j \delta^{t_j}_{U} . \]
\end{theorem}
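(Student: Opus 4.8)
The plan is to identify \emph{non-contextuality} of $\{ d_U \}$ with the existence of a single global distribution $d$ on the finite set $\Two^X$ whose marginal onto each context recovers the given local data; that is, $d_U(s) = \sum_{t :\, t|U = s} d(t)$ for every $U \in \UU$ and $s \in \Two^U$. This is the probabilistic (hidden-variable) reading of non-contextuality. Granting this characterisation, the theorem reduces to a statement about the two equivalent presentations of such a $d$: either as a distribution on $\Two^X$, or as a convex combination of the deterministic models $\delta^{t}$ indexed by $t \in \Two^X$. The engine of the argument is the single observation that marginalising the point mass at $t$ onto $U$ yields precisely $\delta^{t}_U$, together with the linearity of marginalisation.

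For the forward direction I would suppose $\{ d_U \}$ is non-contextual, witnessed by a global distribution $d$ on $\Two^X$. Since $X$ is finite, $\Two^X$ is finite, so $d$ decomposes as $d = \sum_{t \in \Two^X} d(t)\,[t]$, where $[t]$ denotes the point mass at $t$. Marginalising onto $U$ sends $[t]$ to the distribution assigning mass $1$ to $t|U$ and $0$ elsewhere, which is exactly $\delta^{t}_U$; by linearity of marginalisation this gives $d_U = \sum_{t} d(t)\, \delta^{t}_U$ for each $U$. Taking $J := \Two^X$ and $\mu_t := d(t)$ then exhibits $\{ d_U \}$ in the required form, the coefficients summing to $1$ because $d$ is a distribution.

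For the converse I would suppose $d_U = \sum_{j \in J} \mu_j\, \delta^{t_j}_U$ for all $U$, with $\mu_j \geq 0$ and $\sum_j \mu_j = 1$, and build the candidate global distribution by collecting coefficients over coinciding assignments, $d(t) := \sum_{j :\, t_j = t} \mu_j$. I would then verify that $d$ marginalises correctly: for each $U$ and $s \in \Two^U$,
\[ \sum_{t :\, t|U = s} d(t) \;=\; \sum_{j :\, t_j|U = s} \mu_j \;=\; \sum_{j} \mu_j\, \delta^{t_j}_U(s) \;=\; d_U(s), \]
using only the defining property that $\delta^{t_j}_U(s) = 1$ precisely when $t_j|U = s$. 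Hence $d$ witnesses non-contextuality of $\{ d_U \}$.

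The routine part is the marginalisation bookkeeping in both directions; the only place requiring minor care is the many-to-one correspondence between the index set $J$ and $\Two^X$ in the converse, which is handled by the summation defining $d(t)$. The genuine conceptual content — and the step I expect to carry the real weight — is the identification of non-contextuality with the existence of a global distribution on $\Two^X$, since this is where the hidden-variable interpretation enters and is precisely the general form of Fine's theorem established in \cite{abramsky2011unified}. Once that characterisation is in hand, the equivalence with convex combinations of deterministic models is essentially the tautology that a distribution on the finite set $\Two^X$ \emph{is} such a convex combination, transported through marginalisation.
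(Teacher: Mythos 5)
Your marginalisation bookkeeping is correct in both directions, but the proof rests entirely on taking ``non-contextual $=$ there exists a global distribution $d$ on $\Two^X$ whose marginals are the $d_U$'' as the \textbf{definition}, and that is exactly where the theorem's content lies, not a reading one is free to adopt. In the source from which this statement is imported --- the theorem is quoted verbatim from \cite[Theorem 8.1]{abramsky2011unified}, and the present paper gives no proof of its own --- non-contextuality means realizability by a hidden-variable model in the general sense: an arbitrary hidden-variable space $\Lambda$ carrying a distribution, and for each $\lambda \in \Lambda$ stochastic response distributions that are context-independent (factorizable) and jointly reproduce every $d_U$. The substance of the theorem (the generalized Fine theorem) is that this much broader class of models collapses onto the convex hull of the finitely many deterministic models $\delta^t$, $t \in \Two^X$; proving it requires showing that a factorizable but \textbf{stochastic} hidden-variable model can be reduced to a mixture of global deterministic assignments, e.g.\ by decomposing, for each fixed $\lambda$, the product of single-measurement response distributions into a distribution over global assignments consistently across all contexts. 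Your proposal never engages with this step: it relocates it into the definition and then defers it to the very citation whose content is the statement being proved, so as a proof it is circular.

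What you do establish correctly is the equivalence between the two packagings of the conclusion: ``there is a distribution on $\Two^X$ with the right marginals'' holds if and only if $\{d_U\}$ is a convex combination $\sum_j \mu_j \delta^{t_j}$. That repackaging is genuinely used later (it is how Proposition~\ref{linequivglobprop} reads the theorem), but it is the routine part, as you yourself concede. In fairness, there was no internal argument for you to match --- the paper imports the result rather than proving it --- but a self-contained proof must start from the factorizable hidden-variable definition of non-contextuality and \textbf{derive} the global distribution, not posit it; as written, your attempt proves only the easy reformulation and appeals to the cited theorem for everything else.
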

In fact, this gives a canonical form for such models, subsuming the usual notions of local or non-contextual hidden-variable models.

\section{The general form of logical Bell inequalities}
 \label{generalformsection}
It will be useful to establish some notation for expressing logical Bell inequalities.
Suppose we are given a cover $\UU$ on a set $X$. As illustrated in the examples we have looked at previously, we will regard $X$ as a set of boolean variables. We shall consider expressions of the form
\[  \sum_{i=1}^N k_i \vphi_i \]
where for each $i$, $k_i$ is a non-negative integer, and $\vphi_i$ is a formulas whose variables
are drawn from $U_i \in \UU$.

We think of such expressions as  \emph{multisets} of formulas, where $\vphi_i$ appears with multiplicity $k_i$.
A \emph{sub-multiset} of  $\sum_{i \in I} k_i \vphi_i$ is an expression of the form  $\sum_{i \in I} k'_i \vphi_i$, where for each $i$, $0 \leq k'_i \leq k_i$. The \emph{cardinality} of $\sum_{i \in I} k_i \vphi_i$ is $\sum_{i \in I} k_i$.
We say that $\sum_{i \in I} k_i \vphi_i$ is \emph{$K$-consistent} if for every sub-multiset of cardinality $> K$,  the underlying set of formulas with positive support has no satisfying assignment.

Given a positive integer $K$, we consider the expression
\begin{equation}
\label{bellineqexp}
\sum_{i=1}^N k_i p(\vphi_i) \; \leq \; K .
\end{equation}
If we are given a probability model $\{ d_U \}_{U \in \UU}$, we can evaluate the formal expression $p(\vphi_i)$ as $p_i := d_{U_i}(S_i)$, where $S_i$ is the set of satisfying assignments in $\Two^{U_i}$ for $\vphi_i$ --- \ie the event defined by $\vphi_i$.

The model \emph{satisfies} the expression~(\ref{bellineqexp}) if
\[ \sum_{i=1}^N k_i p_i \; \leq \; K . \]

\begin{proposition}
\label{kconprop}
The inequality~(\ref{bellineqexp}) is satisfied by all non-contextual models if and only if the multiset $\sum_{i \in I} k_i \vphi_i$ is $K$-consistent.
\end{proposition}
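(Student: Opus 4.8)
The plan is to reduce the quantification over \emph{all} non-contextual models to a finite check over the deterministic models $\dt$, $t \in \Two^X$, and then to translate that check into the combinatorial condition of $K$-consistency. The conceptual engine is convexity. The left-hand side $\sum_i k_i p_i$ of~(\ref{bellineqexp}) is an affine function of the model: each $p_i = d_{U_i}(S_i)$ is linear in the distributions $\{d_U\}$, and Theorem~\ref{ncdetth} tells us that every non-contextual model is a convex combination $\sum_j \mu_j \delta^{t_j}$ of deterministic ones. Since an affine functional attains its extreme values over a convex hull at the extreme points, the inequality~(\ref{bellineqexp}) holds for all non-contextual models if and only if it holds for every deterministic model $\dt$.

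Next I would evaluate the left-hand side at a deterministic model. By definition $\dt_{U_i}(S_i) = 1$ precisely when $t | U_i \in S_i$, i.e.~when $t$ satisfies $\vphi_i$, and is $0$ otherwise. Hence for $\dt$ we have $\sum_i k_i p_i = \sum_{i \,:\, t \models \vphi_i} k_i$, the total multiplicity of those formulas in the multiset that $t$ satisfies. The deterministic check thus becomes: for every $t \in \Two^X$, $\sum_{i \,:\, t \models \vphi_i} k_i \leq K$.

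Finally I would show that this last condition is equivalent to $K$-consistency, arguing by contraposition in both directions. If $K$-consistency fails, there is a sub-multiset $\sum_i k'_i \vphi_i$ of cardinality $> K$ whose underlying set of formulas (those with $k'_i > 0$) has a common satisfying assignment $t$; then $\{ i : k'_i > 0 \} \subseteq \{ i : t \models \vphi_i \}$ and $k'_i \leq k_i$, so $\sum_{i : t \models \vphi_i} k_i \geq \sum_i k'_i > K$, and $\dt$ violates the inequality. Conversely, if some $\dt$ violates the inequality, then $\sum_{i : t \models \vphi_i} k_i > K$, and the sub-multiset carrying full multiplicity $k_i$ on exactly the formulas satisfied by $t$ has cardinality $> K$ while being simultaneously satisfied by $t$, so $K$-consistency fails.

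I expect the only delicate point to be the bookkeeping in this last step: one must match the quantification over sub-multisets (on which $K$-consistency is phrased) with the quantification over global assignments $t$ (which index the deterministic models). The key is that the \emph{extremal} satisfiable sub-multiset associated with a given $t$ is the one carrying full multiplicity on precisely the set $\{ \vphi_i : t \models \vphi_i \}$, and that its cardinality is exactly the value $\sum_{i : t \models \vphi_i} k_i$ computed in the previous step. Everything else — the affineness of the functional and the reduction to the extreme points — is routine once Theorem~\ref{ncdetth} is invoked.
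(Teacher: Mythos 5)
Your proposal is correct and follows essentially the same route as the paper's own proof: reduce to the deterministic models $\dt$ via Theorem~\ref{ncdetth} and convexity of the functional $\sum_i k_i p_i$, observe that at $\dt$ this sum equals the total multiplicity of formulas satisfied by $t$, and identify the failure of $K$-consistency with the existence of some $t$ satisfying a sub-multiset of cardinality exceeding $K$. The only difference is presentational — you invoke the extreme-point principle and spell out the sub-multiset bookkeeping where the paper does an explicit sum interchange and states the combinatorial step more tersely.
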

\begin{proof}
By Theorem~\ref{ncdetth}, any non-contextual model can be written as a convex combination $\sum_j \mu_j \delta^{t_j}$, where $t_j \in \Two^X$.

It suffices to verify~(\ref{bellineqexp}) for the deterministic models $\dt$, since if for each $j$ we have $\sum_{i} k_i \pij \leq K$, where $\pij = \delta^{t_j}_{U_i}(S_i)$, then:
\[ \sum_i k_i(\sum_j \mu_j \pij) \, = \, \sum_j \mu_j (\sum_i  k_i \pij) \, \leq \, \sum_j \mu_j K \, = \, K . \]
Now if the multiset $\sum_{i \in I} k_i \vphi_i$ is $K$-consistent, any $t \in \Two^X$, viewed as a boolean assignment on $X$, can satisfy a sub-multiset of cardinality at most $K$,
and hence $\dt$ will satisfy the inequality~(\ref{bellineqexp}).

Conversely, if $t$ satisfies $K+1$ formulas in the multiset, each corresponding term in~(\ref{bellineqexp}) will be assigned probability $1$ in $\dt$, and hence we will have $\sum_i k_i p_i \geq K+1$.
\end{proof}

Note that the form of logical inequality which we have considered previously is a special case, where we have a set of $N$ formulas which is $(N{-}1)$-consistent. Allowing for the more general notion of $K$-consistency
leads to sharper inequalities, which are needed to obtain completeness.

It is important to note that there is no requirement for the sets $U_i$  to be distinct. Thus different formulas occurring in the multiset may define overlapping subsets of the same row.

We define the general notion of logical Bell inequality over a cover $\UU$ to be given by expressions of the form~(\ref{bellineqexp}), where the multiset of formulas is $K$-consistent. Note that this class of inequalities is defined solely in terms of the cover $\UU$, and a purely logical condition on the formulas.
Thus we may indeed regard this as a logical class; the interesting point is that we can obtain quantitative information about contextuality from conditions which are derived in a purely logical fashion.

\section{Completeness of logical Bell inequalities}

We shall now show that logical Bell inequalities completely characterize contextuality.

We begin by recalling the definition of the \emph{incidence matrix} from  \cite{abramsky2011unified}.
Given a cover $\UU$ on a set $X$, we define a matrix $\MB$ whose rows are indexed by pairs $(U, s)$, where $U \in \UU$, and $s \in \Two^U$; and whose columns are indexed by global assignments $t \in \Two^X$. The matrix entries are defined by:
\[ \MB[(U,s), t] = \left\{ \begin{array}{ll}
1, & t|U = s \\
0 & \mbox{otherwise}
\end{array}
\right.
\]
Note that the column $\MB[\_, t]$ of the matrix corresponds to the deterministic model $\dt$.
We can regard a probabilistic model $\{ d_U \}_{U \in \UU}$ as a real vector $\vv$ in the row space of $\MB$, where $\vv[U, s] = d_U(s)$.

\begin{proposition}
\label{linequivglobprop}
The non-contextuality of the probabilistic model represented by the vector $\vv$ is equivalent to the existence of a non-negative solution $\xx \geq \Zero$ for the linear system
\[ \MB \xx = \vv . \]
\end{proposition}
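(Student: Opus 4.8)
The plan is to unpack the definitions and recognize that this proposition is essentially a restatement of Theorem~\ref{ncdetth} in the language of the incidence matrix $\MB$. First I would observe that the columns of $\MB$ are precisely the deterministic models: by the matrix definition, the column $\MB[\_, t]$ has a $1$ in position $(U,s)$ exactly when $t|U = s$, which is the defining condition for $\dt_U(s) = 1$. Thus $\MB[\_,t] = \dt$ viewed as a vector. A non-negative vector $\xx \geq \Zero$ indexed by the global assignments $t \in \Two^X$ is therefore exactly a choice of non-negative coefficients, and the matrix--vector product $\MB \xx$ computes the corresponding non-negative linear combination $\sum_t \xx[t]\, \dt$ of the deterministic models (read off coordinatewise at each $(U,s)$).

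Next I would connect this to convexity. The condition $\MB \xx = \vv$ says that $\vv$ equals this non-negative combination of the deterministic model vectors. The remaining subtlety is the normalization: Theorem~\ref{ncdetth} expresses non-contextuality as a \emph{convex} combination $\sum_j \mu_j \dt[j]$ (with $\sum_j \mu_j = 1$), whereas the linear system only demands non-negativity, not that the coefficients sum to $1$. The reconciliation is that $\vv$ is itself a probability model: for each fixed $U \in \UU$, summing the coordinates $\vv[U,s]$ over $s \in \Two^U$ gives $\sum_s d_U(s) = 1$. Applying the same row-sum to $\MB \xx$ over a single block $U$, and using that each column $\dt$ also has unit sum over that block (exactly one $s$ satisfies $t|U = s$), forces $\sum_t \xx[t] = 1$ automatically. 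Hence any non-negative solution is automatically a convex combination, and the gap between ``non-negative'' and ``convex'' closes by itself.

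Putting these two observations together gives both directions. If $\vv$ is non-contextual, Theorem~\ref{ncdetth} supplies a convex combination $\vv = \sum_j \mu_j \dt[j]$; setting $\xx[t]$ to be the total weight $\mu$ assigned to $t$ (and $0$ for global assignments not occurring) yields a non-negative solution of $\MB \xx = \vv$. Conversely, given $\xx \geq \Zero$ with $\MB \xx = \vv$, the normalization argument shows $\sum_t \xx[t] = 1$, so $\vv = \sum_t \xx[t]\, \dt$ exhibits $\vv$ as a convex combination of deterministic models, whence $\vv$ is non-contextual by Theorem~\ref{ncdetth}.

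I expect the only real step requiring care to be the normalization bookkeeping --- verifying that the block row-sums over each $U$ genuinely force $\sum_t \xx[t] = 1$ rather than merely constraining the coefficients per-context. Everything else is a direct translation between the matrix formalism and the convex-combination statement already established. The proof is therefore short, with the main conceptual content being the identification of the columns of $\MB$ with the deterministic models $\dt$.
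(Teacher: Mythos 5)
Your proof is correct and takes essentially the same approach as the paper's own: identify the columns of $\MB$ with the deterministic models $\dt$, use the block row-sums over each $U \in \UU$ (together with the fact that each column has exactly one $1$ per block) to force $\sum_{t} \xx[t] = 1$, and then invoke Theorem~\ref{ncdetth} to translate between non-negative solutions and convex combinations. The paper's proof is simply a more compressed rendering of this same argument, including the normalization step you flag as the only delicate point.
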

\begin{proof}
For each $U \in \UU$, the sub-vector $\vv_U$ of $\vv$ forms a probability distribution on $\Two^U$, and hence sums to $1$. Since the restriction map $\Two^X \rarr \Two^U$ is surjective, and $\MB \xx = \vv$ implies $(\MB \xx)_U = \vv_U$, it follows that the entries of $\xx$ sum to $1$. Thus $\xx$ defines a probability distribution $\mu$ on $\Two^X$. Moreover, the equation $(\MB \xx)_U = \vv_U$ is equivalent to \[ d_U = \sum_{t \in \Two^X} \mu(t) \delta^t_U . \]
Thus a solution $\xx$ exists if and only if the model can be written as a convex combination as in Theorem~\ref{ncdetth}.
\end{proof}

Thus the set $\NN$ of non-contextual probability models is the convex hull of the set of deterministic models $\dt$, $t \in \Two^X$.
By the fundamental properties of convex polytopes \cite{matou?ek2007understanding,schrijver1998theory,ziegler1995lectures}, $\NN$ is equivalently specified by a finite set of linear inequalities.

To state this more explicitly, we firstly recall the well-known  Fourier-Motzkin elimination procedure \cite{matou?ek2007understanding,schrijver1998theory,ziegler1995lectures}.

\begin{proposition}[Fourier-Motzkin elimination]
\label{FMprop}
If we are given a finite system $I(\xx,\yy)$ of linear inequalities in the variables $\xx$, $\yy$, we can effectively obtain a finite system $J(\yy)$ of inequalities in the variables $\yy$, such that $\vv$ satisfies $J$ if and only if for some $\ww$, $(\ww, \vv)$ satisfies $I$.
Moreover, $J$ is constructed from $I$ using only the field operations, so if $I$ is rational, so is $J$.
\end{proposition}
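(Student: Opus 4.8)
The plan is to prove the statement by induction on the number of components of $\xx$, the essential case being the elimination of a single variable; the full system $J(\yy)$ is then obtained by eliminating the variables of $\xx$ one at a time.

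For the single-variable step, suppose we wish to eliminate one variable $x$ from a finite system of linear inequalities in $x$ together with some remaining variables $\zz$ (where $\zz$ collects $\yy$ and any components of $\xx$ not yet eliminated). First I would partition the inequalities according to the sign of the coefficient of $x$ and, after normalizing, bring each into one of three forms: those with positive coefficient, rewritten as $x \geq L_j(\zz)$ for affine functions $L_j$; those with negative coefficient, rewritten as $x \leq U_k(\zz)$; and those in which $x$ does not occur, which are affine inequalities $C_\ell(\zz) \geq 0$. The projected system is then defined to consist of all the $x$-free inequalities $C_\ell(\zz) \geq 0$ together with one inequality $L_j(\zz) \leq U_k(\zz)$ for each pair $(j,k)$ of a lower and an upper bound. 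This is a finite system, since the number of new inequalities is bounded by the product of the number of lower bounds and the number of upper bounds.

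For correctness I would verify both directions. If a value $w$ of $x$ together with $\zz$ satisfies the original system, then $L_j(\zz) \leq w \leq U_k(\zz)$ for all $j, k$, so each pairing yields $L_j(\zz) \leq U_k(\zz)$, and the $x$-free inequalities are inherited directly; hence $\zz$ satisfies the projected system. Conversely, if $\zz$ satisfies the projected system, then $\max_j L_j(\zz) \leq \min_k U_k(\zz)$ (with the convention that an empty max is $-\infty$ and an empty min is $+\infty$), so any $w$ in the nonempty interval $[\max_j L_j(\zz), \min_k U_k(\zz)]$ satisfies all constraints involving $x$, while the $x$-free inequalities hold by hypothesis; thus $(w, \zz)$ satisfies the original system. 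The edge cases with no lower bounds or no upper bounds are covered by the convention that $x$ is then unbounded on the corresponding side and may be chosen freely.

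Rationality and effectivity follow by inspection: each new inequality $L_j(\zz) \leq U_k(\zz)$ arises as a non-negative combination of the $j$-th and $k$-th original inequalities, weighted by the absolute values of the coefficients of $x$ so as to cancel the $x$-term, and hence uses only the field operations and preserves rationality. Iterating the single-variable step over the components of $\xx$ (the base case $|\xx| = 0$ being $J = I$) preserves finiteness, rationality, and effectivity at each stage, yielding the desired $J(\yy)$. The main obstacle is the correctness of the ``if'' direction: it rests on the observation that pairing \emph{every} lower bound with \emph{every} upper bound is exactly what forces $\max_j L_j(\zz) \leq \min_k U_k(\zz)$, so that the feasible interval for the eliminated variable is guaranteed to be nonempty; omitting even a single pairing could leave this interval empty and break the equivalence.
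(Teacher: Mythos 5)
Your proof is correct and complete: the single-variable step (splitting the inequalities into lower bounds, upper bounds, and $x$-free constraints, then pairing every lower bound with every upper bound), the two-directional verification with the empty-max/empty-min convention, and the induction on the number of eliminated variables constitute exactly the classical Fourier--Motzkin argument, and your closing remark about effectivity and rationality via non-negative combinations is also right. Note that the paper itself offers no proof of this proposition --- it is recalled as a well-known result with citations to standard references --- so there is nothing in the paper to diverge from; your argument is precisely the standard one found in those references.
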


The size of $J$ is, in the worst case, doubly exponential in the size of $I$.
Nevertheless, Fourier-Motzkin elimination is widely used in computer-assisted verification and polyhedral computation \cite{strichman2002solving,christof1997porta}.

In our case, we begin with the `symbolic' system
\[ \MB \xx = \yy, \qquad \xx \geq \Zero, \qquad \One \cdot \xx = 1 \]
in variables $\xx$, $\yy$. This can be written as
\[  \begin{array}{rlr}
a_{1,j}x_1 + \cdots + a_{N,j}x_N - y_j & \geq 0, & \qquad j = 1, \ldots , D \\
-a_{1,j}x_1 + \cdots + -a_{N,j}x_N + y_j & \geq 0, & \qquad j = 1, \ldots , D \\
x_i & \geq 0, &\qquad i = 1, \ldots , N \\
x_1 + \cdots + x_N & \geq 1 & \\
-x_1 + \cdots + -x_N & \geq -1 &
\end{array}
\]
where the coefficients $a_{i,j}$ come from the incidence matrix $\MB$, and
\[ N := 2^{\card{X}}, \qquad D := \sum_{U \in \UU} 2^{\card{U}}  \]
are the dimensions of $\MB$. Note that, since the system is symbolic, we have to add the constraint that $\xx$ sums to $1$ explicitly.

Writing this system as $I(\xx,\yy)$, by Proposition~\ref{linequivglobprop}, we have
\[ \NN \; = \; \{ \vv \mid \exists \ww. \, I(\ww, \vv) \} . \]
By Proposition~\ref{FMprop}, we can eliminate the variables $\xx$ from this system, producing a system $J$ of inequalities in the variables $\yy$, such that $\vv$ satisfies $J$ if and only if for some $\ww$, $(\ww, \vv)$ satisfies $I$. Thus $\vv$ is in $\NN$ if and only if $\vv$ satisfies $J$.

Thus we obtain the following result.
\begin{proposition}
\label{finratineqprop}
There is a finite set of rational vectors $\rr_1, \ldots , \rr_p$, and rational numbers $r_1, \ldots , r_p$, such that, for all probability models $\vv$:
\[ \vv\in \NN \IFF \forall i = 1, \ldots , p. \; \rr_i \cdot \vv \; \leq \; r_i . \]
\end{proposition}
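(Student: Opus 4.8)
The plan is to read off the result directly from the Fourier--Motzkin construction already assembled above, so the main work is bookkeeping rather than new ideas. I start from the symbolic system $I(\xx, \yy)$ displayed above, whose coefficients are drawn from the incidence matrix $\MB$ together with the two normalization constraints, and are therefore integers (in particular rational). By Proposition~\ref{linequivglobprop}, a probability model $\vv$ lies in $\NN$ precisely when the system $\MB\xx = \vv$, $\xx \geq \Zero$, $\One \cdot \xx = 1$ admits a solution, i.e.\ precisely when $\exists \ww.\, I(\ww, \vv)$ holds.

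First I would invoke Proposition~\ref{FMprop} to eliminate the block of variables $\xx$ from $I(\xx, \yy)$. This yields a finite system $J(\yy)$ of linear inequalities in $\yy$ alone with the property that $\vv$ satisfies $J$ if and only if $(\ww, \vv)$ satisfies $I$ for some $\ww$. Chaining this with the previous paragraph gives, for every probability model $\vv$, the equivalences $\vv \in \NN \iff \exists \ww.\, I(\ww, \vv) \iff J(\vv)$. Since $I$ is rational and Fourier--Motzkin elimination uses only the field operations, Proposition~\ref{FMprop} guarantees that $J$ is again rational.

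It then remains only to put each inequality of $J$ into the stated normal form, which is pure rearrangement. A typical inequality of $J$ reads $\sum_j c_j y_j \geq c_0$ with $c_j, c_0 \in \mathbb{Q}$; writing $y_j = \vv[U, s]$ and collecting terms on one side turns it into $\rr \cdot \vv \leq r$ with the rational vector $\rr := (-c_j)_j$ and the rational scalar $r := -c_0$. Enumerating the finitely many inequalities of $J$ as $\rr_1, \ldots, \rr_p$ with bounds $r_1, \ldots, r_p$ then yields exactly the claimed characterization.

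I do not expect a genuine obstacle here, since the substantive content --- that $\NN$, being the convex hull of the finitely many rational deterministic models $\dt$, is cut out by finitely many rational half-spaces --- is precisely what the Fourier--Motzkin machinery delivers, and is an instance of the fundamental equivalence between the vertex and half-space descriptions of a polytope. The one point requiring a moment's care is that the equivalence must hold \emph{uniformly} over all probability models $\vv$, not just for a fixed one; this is automatic because the correctness assertion of Proposition~\ref{FMprop} quantifies over all valuations of the retained variables $\yy$, and the characterization of $\NN$ in Proposition~\ref{linequivglobprop} is likewise stated uniformly in $\vv$.
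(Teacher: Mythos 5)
Your proposal is correct and follows exactly the paper's own route: characterize $\NN$ via Proposition~\ref{linequivglobprop} as the set of $\vv$ for which the rational symbolic system $I(\xx,\yy)$ has a solution in $\xx$, eliminate $\xx$ by Fourier--Motzkin (Proposition~\ref{FMprop}) to get a finite rational system $J(\yy)$, and rearrange its inequalities into the form $\rr_i \cdot \vv \leq r_i$. The bookkeeping points you flag (rationality preservation, sign conventions, uniformity in $\vv$) are handled the same way in the paper, so there is nothing to add.
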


\subsection{Completeness of logical Bell inequalities}

Suppose we are given a cover $\UU$. A rational inequality over $\UU$ is given by a rational vector $\rr$ and a rational number $r$. A probability model $\vv$ satisfies this inequality if $\rr \cdot \vv \leq r$.
Two inequalities are equivalent if they are satisfied by the same probability models.

\begin{theorem}
\label{ratineqequivlbellth}
A rational inequality is satisfied by all non-contextual models over $\UU$ if and only if it is equivalent to a logical Bell inequality.
\end{theorem}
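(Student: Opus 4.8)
The plan is to prove the two directions of the equivalence separately. One direction is essentially routine: if an inequality is equivalent to a logical Bell inequality, then by Proposition~\ref{kconprop} the underlying multiset is $K$-consistent, and that proposition already tells us the logical Bell inequality is satisfied by all non-contextual models; since equivalent inequalities are satisfied by the same models, the given rational inequality is too. The substance of the theorem is the converse: starting from an \emph{arbitrary} rational inequality $\rr \cdot \vv \leq r$ that happens to hold on all of $\NN$, I must manufacture an equivalent \emph{logical} Bell inequality, i.e.\ one of the form~(\ref{bellineqexp}) with non-negative integer coefficients $k_i$ and a $K$-consistent multiset of formulas.

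First I would reduce to the facet inequalities obtained from Proposition~\ref{finratineqprop}. Since $\NN$ is a convex polytope cut out by the finite rational system $\rr_1 \cdot \vv \leq r_1, \ldots, \rr_p \cdot \vv \leq r_p$, any rational inequality valid on $\NN$ is a consequence of these; so it suffices to show each $\rr_i \cdot \vv \leq r_i$ is equivalent to a logical Bell inequality, and that the class of logical Bell inequalities is closed under the operations (non-negative combination, adding redundant constraints) needed to recover an arbitrary valid inequality. The key structural fact I would exploit is that the components of $\vv$ are indexed by pairs $(U,s)$ with $\vv[U,s] = d_U(s)$, and that each such coordinate \emph{is} the probability $p(\vphi_s)$ of the single-assignment formula $\vphi_s$ from~(\ref{phiseq}). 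Thus any vector $\rr$ with non-negative rational entries already reads directly as a formal sum $\sum k_i\, p(\vphi_i)$ once we clear denominators to make the $k_i$ integral; the formulas $\vphi_i$ are the $\vphi_s$'s, each living on its row $U_i$.

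The main work is then to handle the \emph{sign} of the coefficients and to identify the bound $K$ with the consistency number of the resulting multiset. For signs, I would use the normalization identity $\sum_{s \in \Two^U} d_U(s) = 1$ valid for every $U \in \UU$: a negative coefficient $-k\,p(\vphi_s)$ on a row $U$ can be rewritten, up to an additive constant absorbed into the bound $r$, as $+k \sum_{s' \neq s} p(\vphi_{s'})$, i.e.\ as non-negative coefficients on the complementary assignments of that row (equivalently, using $p(\neg\vphi_s) = 1 - p(\vphi_s)$ and the fact that any subset of $\Two^U$ is itself definable by a formula). Iterating this across the rows converts $\rr \cdot \vv \leq r$ into an equivalent inequality with non-negative integer coefficients and some integer bound $K$. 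Finally I must check the logical condition: the claim is that $K$ equals the largest value of $\sum_i k_i$ attainable by a single deterministic assignment $t \in \Two^X$, because validity of the inequality on the \emph{vertices} $\dt$ of $\NN$ (which suffice by convexity, as in the proof of Proposition~\ref{kconprop}) says precisely that no $t$ satisfies a sub-multiset of cardinality exceeding $K$ — and that is exactly $K$-consistency.

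The step I expect to be the main obstacle is the sign-conversion argument: making precise that after rewriting negative coefficients via the row-normalization constraints one lands on a multiset whose $K$-consistency number matches the tightened bound, rather than a strictly larger or smaller one. The delicate point is that the bound $K$ in~(\ref{bellineqexp}) is forced to be the \emph{exact} integer maximum of $\sum_i k_i \pij$ over $t$ — if the original $r$ were a non-integer or a non-tight rational, one first tightens it to this integer maximum (which is legitimate since on the integer-coordinate vertices the left-hand side is integer-valued), and only then is the resulting inequality genuinely equivalent to the $K$-consistent logical Bell inequality rather than merely implied by it. Verifying that this tightening preserves the solution set on all of $\NN$ (not just the vertices) is where I would spend the most care, appealing again to the vertex description of the polytope.
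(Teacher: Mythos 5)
Your core construction is the paper's own: clear denominators to get an integer inequality $\kk \cdot \vv \leq M$, read each coordinate $\vv[U,s]$ as the probability $p(\vphi_s)$, and convert negative coefficients via $p(\neg\vphi_s) = 1 - p(\vphi_s)$, absorbing the resulting constants into the bound to obtain $\sum_{U,s} \kUs\, p(\tUs) \leq K$ with $K = M + \sum_{\kk[U,s]<0}\kUs$; since this rewriting is an exact algebraic identity, the new inequality is equivalent to the old one over \emph{all} probability models, and Proposition~\ref{kconprop} then converts validity on $\NN$ into $K$-consistency. (Your preliminary reduction to the facet system of Proposition~\ref{finratineqprop} is unnecessary and would be delicate anyway: an arbitrary valid inequality is \emph{implied by} a non-negative combination of facets, not equal to one, and implication is weaker than the equivalence the theorem demands. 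You effectively abandon this reduction, so no harm is done.)

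The genuine problem is your final ``tightening'' step. You assume the bound $K$ in a logical Bell inequality must be the \emph{exact} maximum of $\sum_i k_i$ over deterministic assignments $t \in \Two^X$. The definition does not require this: $K$-consistency only demands that no assignment satisfy a sub-multiset of cardinality exceeding $K$, and this condition is monotone in $K$ --- any integer at least the true maximum qualifies. So no tightening is needed: the bound $K = M + \sum_{\kk[U,s]<0}\kUs$ produced by the sign conversion is automatically acceptable, and its $K$-consistency follows at once from Proposition~\ref{kconprop} applied to the transformed (equivalent) inequality. Worse, if you actually performed the tightening, you would break the theorem: equivalence of inequalities is defined over all probability models, not over $\NN$, and lowering the bound from $K$ to the vertex maximum $K^{*}$ leaves validity on $\NN$ unchanged while shrinking the solution set among \emph{contextual} models --- precisely those with $\sum_i k_i p_i \in (K^{*}, K]$, which exist in every interesting case, since the possibility of such violations is the whole point. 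The tightened inequality would then merely imply the original rather than be equivalent to it, so your proof as written would not establish the stated theorem. Delete the tightening (and handle rationality purely by clearing denominators, which is equivalence-preserving), and your argument coincides with the paper's proof.
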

\begin{proof}
A rational inequality determines an equivalent integer inequality given by an integer vector $\kk$ and an integer $M$, obtained by clearing denominators.

Suppose that we are given an integer vector $\kk$ indexed by $(U, s)$, where $U \in \UU$ and $s \in \Two^U$.
For each $(U, s)$, we define non-negative integers $\kUs$, and formulas $\tUs$ in the variables $U$:
\[ \begin{array}{lcl}
\kUs & = & | \kk[U, s] | \\
\tUs & = & \left\{ \begin{array}{ll}
\vphi_s, & \kk[U,s] \geq 0 \\
\neg \vphi_s, & \kk[U, s] < 0.
\end{array} \right.
\end{array}
\]
Here we use $\vphi_s$ as  defined in~(\ref{phiseq}).

Now suppose we are given a probability model $\vv$.
For each $(U, s)$, we define $\pUs$ to be the probability assigned by $\vv$ to the subset of $\Two^U$ defined by $\tUs$.

We claim that:
\begin{equation}
\label{doteqlog}
\kk \cdot \vv \; = \; \sum_{U,s} \kUs \pUs \; - \; \sum_{\kk[U, s] < 0} \kUs .
\end{equation}
To see this,
for each $(U, s)$ we compare $\kk[U,s]\vv[U,s]$ with $\kUs \pUs$:
\begin{itemize}
\item If $\kk[U,s] \geq 0$, then $\kk[U,s]\vv[U,s] = \kUs \pUs$.
\item If $\kk[U,s] < 0$, we have:
\[ \kk[U,s]\vv[U,s] \; = \;  \kUs((1 - p(\vphi_s)) -1) \; = \; \kUs (\pUs - 1) . \]
\end{itemize}
Collecting terms, we obtain~(\ref{doteqlog}).

We now consider the expression
\begin{equation}
\label{thexp}
\sum_{U,s} \kUs p(\tUs) \; \leq \; K ,
\end{equation}
where $K = M + \sum_{\kk[U, s] < 0} \kUs$.

By~(\ref{doteqlog}), a probability model $\vv$ will satisfy this inequality if and only if $\kk \cdot \vv \leq M$. Thus this inequality is equivalent to the rational inequality we began with.

Now suppose that this inequality is satisfied by all non-contextual models.
Since the coefficients $\kUs$ in~(\ref{thexp}) are non-negative, $K$ must be non-negative.
By Proposition~\ref{kconprop}, the multiset of formulas $\sum_{U, s} \kUs \tUs$ is $K$-consistent, and thus~(\ref{thexp}) is a logical Bell inequality.

Thus every rational inequality satisfied by all non-contextual models is equivalent to a logical Bell inequality.
From Proposition~\ref{kconprop}, we also have the converse:~every logical Bell inequality is satisfied by all non-contextual models.
\end{proof}

Combining Proposition~\ref{finratineqprop} and Theorem~\ref{ratineqequivlbellth}, we obtain the following completeness result.

\begin{theorem}
\label{compth}
The polytope of non-contextual probability models over any cover $\UU$ is determined by a finite set of logical Bell inequalities. Moreover, these inequalities can be obtained effectively from $\UU$.
Thus a probabilistic model over any cover is contextual if and only if it violates one of finitely many logical Bell inequalities.
\end{theorem}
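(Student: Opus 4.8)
The plan is to assemble Theorem~\ref{compth} as a corollary of the two preceding results, so the work is a combination argument in which the only points needing attention are rationality and effectiveness. First I would invoke Proposition~\ref{finratineqprop} to obtain a finite family of rational inequalities $\rr_i \cdot \vv \leq r_i$, for $i = 1, \ldots, p$, whose simultaneous satisfaction cuts out exactly the polytope $\NN$ of non-contextual models. By construction each of these inequalities holds on all of $\NN$, that is, it is satisfied by every non-contextual probability model.

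Next, since each $\rr_i \cdot \vv \leq r_i$ is a rational inequality satisfied by all non-contextual models, I would apply Theorem~\ref{ratineqequivlbellth} to each of the $p$ inequalities in turn, replacing it by an equivalent logical Bell inequality. Here it is essential that the notion of equivalence in Theorem~\ref{ratineqequivlbellth} is at the level of the feasible set: two inequalities are equivalent when they are satisfied by the \emph{same} probability models (not merely the non-contextual ones), as witnessed in that proof by the identity~(\ref{doteqlog}), valid for all $\vv$. Consequently, swapping each rational inequality for its logical counterpart leaves the set of models satisfying the whole family unchanged, so the finite conjunction of these logical Bell inequalities defines precisely $\NN$. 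This establishes the first assertion.

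For effectiveness I would trace the two constructions and observe that both are algorithmic. Proposition~\ref{finratineqprop} is itself obtained by Fourier--Motzkin elimination (Proposition~\ref{FMprop}) applied to the symbolic rational system $\MB\xx = \yy$, $\xx \geq \Zero$, $\One \cdot \xx = 1$, a procedure which is effective and preserves rationality; and the passage from a rational inequality to its logical Bell form inside the proof of Theorem~\ref{ratineqequivlbellth} is completely explicit, namely clearing denominators to extract an integer vector $\kk$ and integer $M$, then reading off the coefficients $\kUs$, the formulas $\tUs$, and the bound $K$. Composing these two effective procedures yields the finite set of logical Bell inequalities computably from the cover $\UU$ alone. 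The closing biconditional then follows by negation: a model $\vv$ is contextual exactly when $\vv \notin \NN$, which by the defining family occurs exactly when $\vv$ violates at least one of the finitely many logical Bell inequalities.

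I do not expect a genuine obstacle, since the statement is essentially the composite of Proposition~\ref{finratineqprop} and Theorem~\ref{ratineqequivlbellth}; the one point I would take care over is precisely the equivalence-preservation in the second step. One must confirm that replacing the rational constraints by logical Bell constraints one at a time preserves the common feasible region, even though the resulting logical inequality may have quite different coefficients from the rational inequality it supplants. The resolution is that equivalence is defined as coincidence of the feasible set over all probability models, so each replacement preserves its associated half-space within the affine space of probability distributions, and hence the intersection is unaffected.
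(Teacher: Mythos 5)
Your proposal is correct and follows essentially the same route as the paper's own proof: compose Proposition~\ref{finratineqprop} with the effective construction in Theorem~\ref{ratineqequivlbellth}, noting that equivalence there is coincidence of satisfying models over \emph{all} probability models, so replacing each rational inequality by its logical counterpart preserves the polytope $\NN$. Your extra attention to the equivalence-preservation and the effectiveness of both steps is exactly the (implicit) content of the paper's two-sentence argument.
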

\begin{proof}
By Proposition~\ref{finratineqprop}, given $\UU$ we can effectively obtain a finite set of rational inequalities defining the non-contextual polytope. Using the construction given in the proof of Theorem~\ref{ratineqequivlbellth}, we can effectively transform these rational inequalities into equivalent logical Bell inequalities.
\end{proof}

\section{Logical description of correlation inequalities}

We shall now show that correlation inequalities can also be analyzed logically; in fact, they form a special case of the logical inequalities we have already described.

A probability model $\vv$ determines a vector $\ev = (E_U)_{U \in \UU}$ of expectation values. Here
\[ E_U := (+1) \cdot p(\pU) + (-1)\cdot p(\neg \pU) , \]
where $\pU$ is a formula whose satisfying assignments are those with an even number of $1$ outcomes. Thus we can define
\begin{equation}
\label{pcfdef}
\pU \; := \; \neg \bigoplus_{x \in U} \neg x .
\end{equation}

The set of expectation vectors of non-contextual models is the image under a linear map of the convex polytope of non-contextual probability models, and hence forms a convex polytope $\EP$, with vertices given by the vectors $\et$, $t \in \Two^X$.

Clearly, any probability model $\vv$ whose expectation vector $\ev$ is not in $\EP$ must be contextual. However, the converse is \emph{not} the case. We shall return to this point in Section~\ref{nsigvex}.
Nevertheless, the correlation inequalities have received a great deal of attention in the literature on non-locality, and it is clearly of considerable interest to give a complete characterization.

We shall now give a logical characterization of a complete set of inequalities for the polytope $\EP$ on an arbitrary cover $\UU$.

\begin{theorem}
\label{corrineqth}
For any probability model $\vv$ such that $\ev \not\in \EP$, there is a logical Bell inequality
\begin{equation}
\label{lbcorreq}
\sum_{U \in \UU} k_U p(\tU) \; \leq \; K
\end{equation}
 which is violated by $\vv$, where for each $U$, $\tU$ is either $\pU$ or $\neg \pU$.
\end{theorem}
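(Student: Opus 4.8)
```latex
The plan is to reduce the correlation-inequality statement to the already-established
completeness result for probability models, Theorem~\ref{ratineqequivlbellth}, by
exploiting the fact that the expectation polytope $\EP$ is a linear image of the
non-contextual probability polytope $\NN$. Since $\ev \not\in \EP$, and $\EP$ is a
convex polytope (with vertices $\et$, $t \in \Two^X$), there is a separating
hyperplane: a rational vector $\cc$ and rational number $c$ such that
$\cc \cdot \et \leq c$ for all $t \in \Two^X$, while $\cc \cdot \ev > c$. This is a
standard consequence of the separation theorem for polytopes, and rationality can be
arranged because the vertices $\et$ are themselves rational (indeed integer-valued)
vectors. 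So the first step is simply to produce such a separating rational inequality
on the expectation coordinates, valid on all vertices and hence on all of $\EP$.

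The second step is to pull this inequality back to the probability coordinates. Each
expectation coordinate $E_U$ is an affine function of the probability model $\vv$:
explicitly $E_U = 2 p(\pU) - 1$, since $E_U = p(\pU) - p(\neg \pU)$ and
$p(\neg \pU) = 1 - p(\pU)$. Substituting this affine relation into the separating
inequality $\cc \cdot \ev \leq c$ turns it into a rational inequality $\rr \cdot \vv \leq r$
in the probability coordinates, where $\rr$ has entries supported only on the
coordinates $(U, s)$ picking out the even-parity events $\pU$ (and the constants
absorbed into $r$). Because this inequality holds at every expectation vertex $\et$,
and the vertices of $\NN$ map onto the vertices of $\EP$, it holds on all of $\NN$:
that is, it is a rational inequality satisfied by all non-contextual probability
models, yet violated by $\vv$ (which maps to $\ev \not\in \EP$).

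The third and final step invokes Theorem~\ref{ratineqequivlbellth} directly: every
rational inequality satisfied by all non-contextual models is equivalent to a logical
Bell inequality of the form~(\ref{bellineqexp}). Applying the construction in that
theorem's proof to $\rr \cdot \vv \leq r$ yields the desired logical Bell inequality
$\sum_{U \in \UU} k_U p(\tU) \leq K$, which is violated by $\vv$. Here one must check
that the resulting formulas $\tU$ are exactly $\pU$ or $\neg \pU$: this is automatic,
because $\rr$ is supported on the even-parity coordinates, so the construction assigns
to each $U$ either $\pU$ (when the relevant coefficient is non-negative) or $\neg \pU$
(when it is negative), with multiplicity $k_U = |\,\text{coeff}\,|$.

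I expect the main obstacle to be the passage in the second step from ``holds at every
vertex $\et$'' to ``holds on all non-contextual probability models,'' i.e.\ ensuring
the pulled-back inequality genuinely separates $\NN$ rather than merely $\EP$. The
subtlety is that distinct probability models can share an expectation vector, so the
linear map $\vv \mapsto \ev$ is not injective; one must confirm that the affine
substitution $E_U = 2p(\pU)-1$ composed with the separating functional produces a
functional on $\vv$-space that is constant on fibres of the map and hence well-defined
as a bound on $\NN$. This is guaranteed precisely because $\rr \cdot \vv$ depends on
$\vv$ only through the coordinates $p(\pU)$, i.e.\ only through $\ev$, so the bound
descends cleanly. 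Verifying this compatibility carefully is the crux; the rest is the
routine separation argument and a direct appeal to Theorem~\ref{ratineqequivlbellth}.
```
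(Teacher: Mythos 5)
Your first two steps coincide with the paper's own proof: since $\EP$ is a polytope with rational vertices $\et$, separation gives an integer inequality $\kk \cdot \ew \leq M$ valid on all non-contextual expectation vectors and violated by $\ev$, and the substitution $E_U = 2p(\pU) - 1$ pulls this back to probability coordinates. (The paper phrases the first step as ``similar reasoning to Theorem~\ref{compth}''; your separation argument is the same thing, and your worry about well-definedness on fibres of $\vv \mapsto \ev$ is, as you yourself conclude, a non-issue.)

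The gap is in your third step. The construction in the proof of Theorem~\ref{ratineqequivlbellth} operates on coefficients indexed by \emph{pairs} $(U,s)$ and produces, for each such pair, the single-assignment formula $\vphi_s$ of~(\ref{phiseq}) or its negation --- never the parity formulas $\pU$, $\neg\pU$. Your pulled-back vector $\rr$ has an entry proportional to $\cc[U]$ at \emph{every} coordinate $(U,s)$ with $s \models \pU$, so that construction outputs the multiset $\sum_{U}\sum_{s \models \pU} \kUs\,(\pm\vphi_s)$, not $\sum_U k_U \tU$ with $\tU \in \{\pU,\neg\pU\}$; your claim that the matching of forms is ``automatic'' is therefore false. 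When $\cc[U] \geq 0$ the discrepancy is numerically harmless, since $\sum_{s\models\pU} p(\vphi_s) = p(\pU)$, but when $\cc[U] < 0$ one has $\sum_{s \models \pU} p(\neg\vphi_s) = n_U - 1 + p(\neg\pU)$, where $n_U$ is the number of even-parity assignments in $\Two^U$, so the output differs from the required form~(\ref{lbcorreq}) by nontrivial constants; repairing this needs a further rewriting and a fresh appeal to Proposition~\ref{kconprop} to certify consistency of the new multiset with the new bound. The paper avoids all of this by never descending to $(U,s)$ coordinates: it splits on the sign of $\kk[U]$ at the level of expectations, writing $\kk[U]E_U = k_U(2p(\pU)-1)$ when $\kk[U] \geq 0$ and $\kk[U]E_U = k_U(2p(\neg\pU)-1)$ when $\kk[U] < 0$, rearranges to
\begin{equation*}
\sum_{U \in \UU} 2k_U\, p(\tU) \; \leq \; M + \sum_{U \in \UU} k_U ,
\end{equation*}
and then invokes Proposition~\ref{kconprop} directly (not Theorem~\ref{ratineqequivlbellth}) to conclude that this is a logical Bell inequality, violated by $\vv$ because $\kk \cdot \ev > M$. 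You should replace your black-box appeal to Theorem~\ref{ratineqequivlbellth} with this sign-splitting argument.
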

\begin{proof}
By similar reasoning to that used in the proof of Theorem~\ref{compth}, there is  an integer vector $\kk$ and an integer $M$ such that $\kk \cdot \ew \leq M$ for all non-contextual models $\ww$, and $\kk \cdot \ev > M$.

For each $U$, and any probability model $\ww$, we consider two cases:
\begin{itemize}
\item If $\kk[U] = k_U$ is positive, then we can write $\kk[U]  \ew[U] = k_U(2p(\pU) - 1)$.
\item If $\kk[U] = -k_U$ is negative, we can write
\begin{multline*}
\kk[U]  \ew[U] = -k_U(2 p(\pU) - 1) = k_U(1 - 2p(\pU)) \\
= k_U(2(1- p(\pU)) - 1) = k_U(2p(\neg \pU) -1) . 
\end{multline*}
\end{itemize}
Rearranging terms, we have
\[ \kk \cdot \ew \; = \; \sum_{U \in \UU} 2k_U p(\tU)  - P \]
where each $\tU$ is either $\pU$ or $\neg \pU$, and $P$ is a positive integer. Hence the inequality $\kk \cdot \ew \leq M$ is equivalent to $\ww$ satisfying the inequality
\begin{equation}
\label{ctheq}
\sum_{U \in \UU} 2 k_U p(\tU) \; \leq \; K
\end{equation}
where $K = M + P$. By Proposition~\ref{kconprop}, the fact that all non-contextual models $\ww$ satisfy $\kk \cdot \ew \leq M$ implies that~(\ref{ctheq}) is a logical Bell inequality.
Since $\kk \cdot \ev > M$, $\vv$ violates this inequality.
\end{proof}

Note that, for any vector $\eta \in \EP$, $\eta = \ew$ for some non-contextual model $\ww$, and $\ww$ satisfies all the logical Bell inequalities.

It is also possible to reverse the procedure described in Theorem~\ref{corrineqth}, to obtain a complete set of inequalities directly applicable to expectation vectors.

Given a logical Bell inequality of the form
\[ \sum_{U \in \UU} 2 k_U p(\tU) \; \leq K \]
where for each $U$, $\tU$ is either $\pU$ or $\neg \pU$, we can form the inequality
\[  \sum_{U \in \UU} l_U E_U \; \leq \; M \]
where $M = K - \sum_{U \in \UU} k_U$, and
\[ l_U = \left\{ \begin{array}{rl}
k_U, & \tU = \pU \\
-k_U, & \tU = \neg \pU .
\end{array}
\right.
\]
We call this class of inequalities on expectation vectors the \emph{logical correlation inequalities}.

As an immediate consequence of Theorem~\ref{corrineqth}, we have.
\begin{theorem}
\label{ecorrineqth}
An expectation vector $\eta$ is in $\EP$ if and only if it satisfies all the logical correlation inequalities.
\end{theorem}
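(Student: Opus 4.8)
The plan is to recognize Theorem~\ref{ecorrineqth} as a direct translation of Theorem~\ref{corrineqth} from the language of probability models to that of expectation vectors, the bridge being the affine change of variables $E_U = 2 p(\pU) - 1$. First I would record the exact correspondence between the two families of inequalities. Given a correlation-form logical Bell inequality $\sum_{U \in \UU} 2 k_U p(\tU) \le K$, where each $\tU$ is $\pU$ or $\neg \pU$, I substitute $p(\pU) = (1 + E_U)/2$ and $p(\neg \pU) = (1 - E_U)/2$. In either case $2 k_U p(\tU) = l_U E_U + k_U$, with $l_U = k_U$ when $\tU = \pU$ and $l_U = -k_U$ when $\tU = \neg \pU$. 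Summing gives the identity
\[ \sum_{U \in \UU} 2 k_U p(\tU) \; = \; \sum_{U \in \UU} l_U E_U \; + \; \sum_{U \in \UU} k_U , \]
so that the logical Bell inequality holds for a model exactly when $\sum_{U} l_U E_U \le M$ with $M = K - \sum_{U} k_U$. This is precisely the associated logical correlation inequality, so the two are equivalent inequality-by-inequality on every probability model.

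With this dictionary in place, both directions follow quickly. For the forward direction, if $\eta \in \EP$ then $\eta = \ew$ for some non-contextual model $\ww$; by the remark following Theorem~\ref{corrineqth}, $\ww$ satisfies every logical Bell inequality, in particular every correlation-form one, and hence by the identity above $\eta$ satisfies every logical correlation inequality. For the converse I would argue contrapositively: suppose $\eta \notin \EP$. If some component $E_U$ lies outside $[-1, 1]$, then $\eta$ already violates one of the trivial bounds $E_U \le 1$ or $-E_U \le 1$, each of which is itself a logical correlation inequality (it arises from the $2$-consistent singleton multiset $2\,\pU$, respectively $2\,\neg \pU$). Otherwise all components lie in $[-1, 1]$, and I realize $\eta$ as the expectation vector $\ev$ of a (possibly contextual) probability model $\vv$ by choosing, for each $U$, any distribution $d_U$ on $\Two^U$ with $p(\pU) = (1 + E_U)/2$. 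Theorem~\ref{corrineqth} then supplies a correlation-form logical Bell inequality violated by $\vv$, and the identity turns it into a logical correlation inequality violated by $\eta$.

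The only genuine subtlety, and the step I would treat most carefully, is the realizability claim in the converse: that every candidate expectation vector with entries in $[-1,1]$ is $\ev$ for some general probability model. Here I rely on the fact that a general model over $\UU$ prescribes each $d_U$ independently, with no no-signalling or compatibility constraint linking the rows, so the marginal parity weights can be set freely row by row; points falling outside the cube $[-1,1]^{\UU}$ are disposed of separately by the trivial bounds. Everything else is the bookkeeping of the affine substitution, already carried out in the passage defining the logical correlation inequalities, so no new convex-geometric input beyond Theorem~\ref{corrineqth} is needed.
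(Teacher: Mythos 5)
Your proof is correct and follows the paper's route: the paper obtains this theorem as an immediate consequence of Theorem~\ref{corrineqth} combined with the affine dictionary $E_U = 2p(\pU) - 1$ used to define the logical correlation inequalities, which is exactly your argument (forward direction via the remark that every $\eta \in \EP$ is $\ew$ for a non-contextual $\ww$, converse via Theorem~\ref{corrineqth} and the translation). The only difference is that you make explicit two details the paper leaves tacit --- the trivial bounds $E_U \le 1$ and $-E_U \le 1$ disposing of vectors outside the cube, and the realizability of any vector in $[-1,1]^{\UU}$ as the expectation vector of a (generally contextual) model, using the fact that the rows $d_U$ of a general model are unconstrained --- and this is a faithful filling-in of the paper's ``immediate consequence'' rather than a different approach.
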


\subsection{Example}
We consider the following correlation inequality for the $(3, 2, 2)$ case given by Werner and Wolf in \cite{werner2001all}:
\[ 1/4 \sum_{i=1}^8 E_i  \; - \;  E_8 \; \leq \; 1 \qquad  (A2). \]
Here $E_i$, for $i = 1, \ldots , 8$, is the expectation value for the combination of measurements whose value, written as a binary string,  is $i-1$.

If we write this more explicitly, and clear the denominator of the scaling factor $1/4$, we obtain:
\[ \sum_{i=1}^7 E_i \, - \, 3E_8 \; \leq \; 4 . \]
If we now convert this to the form~(\ref{bellineqexp}), following the procedure given in the proof of Theorem~\ref{corrineqth}, we obtain the following inequality:
\[ \sum_{i=1}^7 p(\pcf_i) \, + \, 3 p(\neg \pcf_8) \; \leq \; 7. \]
We can see that the multiset
\[  \sum_{i=1}^7 1\pcf_i \, + \, 3 (\neg \pcf_8) \]
is $7$-consistent. In fact, $\neg \pcf_8$, together with any 5 of the formulas $\pcf_1, \ldots , \pcf_7$, is inconsistent.

\subsection{Example}
\label{nsigvex}

There are a number of extremal vertices of the no-signalling polytope in the $(3,2,2)$ case, as listed in \cite{pironio2011extremal}, which satisfy all the correlation inequalities from \cite{werner2001all} \footnote{The first author is grateful to Matty Hoban for  bringing these examples to his attention.}.

We shall examine one of these in detail. This is the vertex numbered 4 in the listing in \cite{pironio2011extremal}.

We shall label the measurements as $a$, $a'$ for site 1; $b$, $b'$ for site 2; and $c$, $c'$ for site 3.
The support of the model for each measurement combination $m$ can be represented by a formula $\vphi_m$; since the distribution on each row is uniform on the support, this completely specifies the model.

We recall the definition of $\pU$ from~(\ref{pcfdef}).
The formulas for the support of the model are defined as follows:
\begin{multline*}
\vphi_{abc} = \vphi_{abc'} = \pcf_{ab}; \quad \vphi_{ab'c'} = \vphi_{a'b'c'} = \pcf_{b'c'} \\
\vphi_{a'bc} = \vphi_{a'b'c} = \pcf_{a'c}; 
\;\;\; \vphi_{ab'c} = \pcf_{ab'c}; \;\;\; \vphi_{a'bc'} = \neg \pcf_{a'bc'} .
\end{multline*}
Combining these, we obtain the following multiset of formulas:
\[ 2 \pcf_{ab} + 2 \pcf_{b'c'} + 2 \pcf_{a'c} + \pcf_{ab'c} + \neg \pcf_{a'bc'} . \]
Since $\pcf_{ab}$ is equivalent to $a \leftrightarrow b$, in the presence of the first three formulas $\pcf_{ab'c}$ is equivalent to $\pcf_{bc'c}$, and $\neg \pcf_{a'bc'}$ is equivalent to $\neg \pcf_{cbc'}$.
Since $\pU$ is independent of the order in which the elements of $U$ are listed, we obtain a contradiction. In fact, this multiset of formulas is $7$-consistent, so the model achieves a maximal violation of the logical Bell inequality
\[ 2 p(\pcf_{ab}) + 2 p(\pcf_{b'c'}) + 2 p(\pcf_{a'c}) + p(\pcf_{ab'c}) + p(\neg \pcf_{a'bc'}) \; \leq \; 7. \]
This yields a concrete example of a no-signalling model which satisfies all the correlation inequalities, while maximally violating the canonical logical Bell inequality arising from its support.

\section{Multiple Outcomes}
\label{moutsec}

Thus far we have focussed exclusively on dichotomic measurements, which are particularly convenient for connecting to logic. However, the general format of measurement covers easily allows the results to be extended to measurements with multiple outcomes.

For example, we consider the case of $(n, k, 2^p)$ Bell scenarios: $n$ sites, $k$ measurements per site, and $2^p$ outcomes per measurement. This corresponds to the following situation in our setting. We have a set $X$ with $nkp$ elements $\{ \mijl \}$, where $i = 1, \ldots , n$, $j = 1, \ldots , k$, and $l = 1, \ldots , p$. We write
\[ \Xij := \{ \mijl \mid l = 1, \ldots , p \}, \qquad X_i := \bigcup_{j=1}^k \Xij . \]
The cover $\UU$ comprises all those subsets $U$ of $X$ such that, for all $i = 1, \ldots , n$, for some $j$, $U \cap X_i = \Xij$. The idea is that $X_i$ is the set of measurements which can be performed at site $i$. There are $k$ choices available at each site between sets $\Xij$ of $p$ dichotomic measurements each. Because these measurements are compatible, they can be performed together, resulting in a measurement with $2^p$ possible outcomes. An overall choice of measurements consists of selecting one such compatible family for each site.

All our results apply directly to this situation, which is itself a very special case of the general notion of cover.
Thus from Theorem~\ref{ecorrineqth}, we have an explicit description of a complete set of correlation inequalities characterizing the $(n, k, 2^p)$ Bell scenarios.


\section{Final Remarks}

For further directions, it would be of particular interest to see how much of the present approach could be lifted to the quantum set, and the Tsirelson inequality \cite{tsirelson1980}.

As regards related work,
the form of expressions we have used for the logical inequalities correspond to basic weight formulas in the logic for reasoning about probabilities studied in \cite{fagin1990logic}, following \cite{nilsson1986probabilistic}, which was motivated by applications in Artifical Intelligence.

The correlation polytopes of Pitowsky \cite{pitowsky1994george}, which have a lineage going back to Boole's `conditions of possible experience' \cite{boole1862theory}, should also be mentioned.
Although this line of thought is certainly in a kindred spirit, Boole's conditions are arithmetical in nature; while the central theme of the present paper is that complete sets of Bell inequalities can be defined in terms of purely logical consistency conditions.

The notion of $K$-consistency is closely related to the well-known \textsf{MAX-SAT} problem in computational complexity \cite{papadimitriou2003computational}.
This asks for the maximum number of clauses in a given set which are satisfiable.

\begin{acknowledgments}
The first author thanks Harvey Brown, Matty Hoban, Shane Mansfield and Rui Soares Barbosa for a number of valuable discussions.
\end{acknowledgments}

\bibliography{bdbib}

\end{document}